\newtheorem{theorem}{Theorem}
\newtheorem{corollary}[theorem]{Corollary}
\newtheorem{definition}[theorem]{Definition}
\newtheorem{example}[theorem]{Example}
\newtheorem{lemma}[theorem]{Lemma}
\newtheorem{remark}[theorem]{Remark}
\numberwithin{equation}{section} \numberwithin{theorem}{section}
\def\IR{\mathbb{R}}
\title{Preservation of some stochastic orders by distortion functions with application to coherent systems with exchangeable components}
\author{Antonio Arriaza\footnote{Universidad de C\'{a}diz, C\'{a}diz, Spain.}\,\,\,\thanks{Correspondence to: Antonio Arriaza, Facultad
        de Ciencias, Universidad de C\'{a}diz, Spain.\newline E-mail: \textit{antoniojesus.arriaza@uca.es}. Telephone number: $+$34 956 012 775.}\,\, and  \ Miguel A. Sordo$^{*}$ 
 }
\date{}
\begin{document}
    \maketitle

\begin{abstract}
  The preservation of stochastic orders by distortion functions has become a topic of increasing interest in the reliability analysis of coherent systems. The reason of this interest is that the reliability function of a coherent system with identically distributed components can be represented as a distortion function of the common reliability function of the components. In this framework, we study the preservation of the excess wealth order, the total time on test transform order, the decreasing mean residual live order and the quantile mean inactivity time  order by distortion functions.  The results are applied to study the preservation of these stochastic orders under the formation of coherent systems with exchangeable components.
  
\quad

\textbf{Keywords:} Stochastic orders $\cdot$ Reliability $\cdot$ Copula $\cdot$
Distorted distributions $\cdot$ Coherent systems.
\end{abstract}

\newpage

\section{Introduction}
This paper is concerned with the preservation of several stochastic orders useful in reliability theory by distortion functions. Unlike other preservation results in the literature, the relevant distortion functions considered in this paper  are starshaped and  antistarshaped,  two classes of functions that contain, respectively, the classes of convex and concave distortion functions. Since the reliability  function of a coherent system with identically distributed (ID) components can be represented as a distortion function of the common reliability function of the components, our results are applied to study the preservation of these stochastic orders under the formation of coherent systems.  

Let $X$ and $Y$ be two non-negative absolutely continuous random variables with finite means $\mu_X$ and $\mu_Y$ and distribution functions $F$ and $G,$ respectively. Let $f$ and $g$ be the corresponding density functions and $\bar{F}=1-F$ and $\bar{G}=1-G,$ respectively, the  reliability (or survival) functions. The quantile function of $X$ is defined by
\[
F^{-1}(p)=\inf\{x\in \mathbb R | F(x)\ge p\}, \text{ for all }p\in (0,1).
\] 
The quantile function of $Y$ is analogously defined. We recall the definitions of the stochastic orders considered in this paper. 

\begin{definition}\label{fourorders}
Under the above assumptions, we say that:
	\begin{itemize}
	\item[a)]$X$ is smaller than $Y$ in the total time on test transfom order, denoted by $X\leq _{ttt}Y,$ if 
	\begin{equation*}
	\int_0^{F^{-1}\left( p\right)}\bar{F}(x)dx \leq
	\int_ 0^{G^{-1}\left( p\right)}\bar{	G}(x)dx, \
	\text{ for all }p\in \left( 0,1\right)
	.
	\end{equation*}
		\item[b)] $X$ is smaller than $Y$ in the excess wealth order, denoted by $X\leq _{ew}Y,$ if 
		\begin{equation*}
		\int_{F^{-1}\left( p\right) }^{\infty }\bar{F}(x)dx \leq
		\int_{G^{-1}\left( p\right) }^{\infty }\bar{	G}(x)dx, \
		\text{ for all }p\in \left( 0,1\right)
		.
		\end{equation*}
		\item[c)]$X$ is smaller than $Y$ in the decreasing mean residual life order, denoted by $X\leq_{dmrl}Y$, if		
	\begin{equation*}
		\frac{\int_{G^{-1}\left( p\right) }^{\infty }\bar{G}(x)dx}{%
			\int_{F^{-1}\left( p\right) }^{\infty }\bar{F}(x)dx}\text{ increases in 
		}p\in \left(0,1\right).  \label{defdmrl}
	\end{equation*}
		\item[d)]$X$ is smaller than $Y$ in the the quantile mean inactivity time order, denoted by $X\leq_{qmit}Y$, if
		\begin{equation*}
				\frac{\int_{0}^{F^{-1}\left( p\right) }F(x)dx}{\int_{0}^{G^{-1}\left(
				p\right) }G(x)dx}
		\text{ decreases in }p \in (0,1).  \label{def2}
		\end{equation*}
		\end{itemize}
	\end{definition}

A classical book on the topic of stochastic orders is Shaked and Shanthikumar\cite{SS07}. Some applications in reliability theory of the total time on test transform  order and the excess wealth order can be found in Kochar et al. \cite{KLS2002}, Li and Chen \cite{LC2004}, Belzunce et al \cite{BMRS2016,BMRS2017} and Sordo and Psarrakos \cite{SP2017}.  Reliability applications of the decreasing mean residual life  order and the quantile mean inactivity time order, can be found in Barlow and Proschan\cite{BP75}, Kochar and Wiens\cite{KocharWiens1987}, Arriaza et al,\cite{ASS17,ABM19} and Kayid et al\cite{Kayid2018}.

A distortion function is a left continuous and increasing function $h:\left[ 0,1\right] \rightarrow \left[ 0,1\right] ,$ such that satisfies $%
h(0)=0$ and $h(1)=1.$ We denote by $\mathfrak{D}$ the class of all distortion functions. Given the survival function $\bar{F}$ of a random variable $X,$ the transformation $\bar{F}_{h}(x)=h\left( \bar{F}%
\left( x\right) \right)$ defines a new
survival function associated to some random variable $X_{h},$ which is considered as 
the distorted random variable induced from $X$ by $h.$ The distribution function $F_h=1-\bar{F}_{h}$ of $X_h$ satisfies $F_h(x)=h^{*}(F(x))$, where  $h^{*}(p)=1-h(1-p)$ is again a distortion function called the dual distortion function associated to $h$. Distorted distributions appear in reliability theory in the study of coherent systems. In general, the lifetime of a coherent system is described by a non-negative random variable $T$ that can be expressed as 
$$T=_{st} \phi(X_1,X_2,\ldots,X_n)$$ 
\noindent where $\phi:\IR^{+}\times\overset{(n)}{\cdots}\times  \IR^{+}\longrightarrow \IR^{+}$ is the structure function associated to the system and $X_i, \ i=1,...,n,$ is the lifetime of the $i$th component of the system (for definitions and basic properties of coherent systems see Barlow and Proschan\cite{BP75}).  
The dependence structure of the components of a system can be modelled by a copula function, which  is an $n$-dimensional distribution function with uniform marginals over the interval $(0,1)$. If the components are ID, the joint reliability function of the random vector $(X_1,\dots,X_n)$ can be expressed as
$$\Pr(X_1>x_1,\dots, X_n>x_n)=C(\bar F(x_1),\dots,\bar F(x_n)),$$
\noindent where $\bar F(x_i)=\Pr(X_i>x_i)$ is the common reliability function of the components and $C$ is  the survival copula associated to the random vector $(X_1,\dots,X_n)$. Then, the system reliability function $\bar F_T$ can be written as
\begin{equation}\label{hfbar}
\bar F_T(t)=h(\bar F(t)),
\end{equation} 
where $h(\cdot)$ is a distortion function which depends on the structure of the system and on the survival copula $C$ (see, e.g., Navarro et al\cite{NASS16}, Mizula and Navarro\cite{MN17}, Navarro\cite{N18}, and Navarro and Rychlik\cite{NavRych}). Analogously, the joint distribution function of $(X_1,\dots,X_n)$ can be expressed as
$$\Pr(X_1\leq x_1,\dots, X_n\leq x_n)=\hat{C}(F(x_1),\dots,F(x_n)),$$

\noindent with $\hat{C}$ the distributional copula associated to the random vector $(X_1,\dots,X_n)$ and $F(x_i)=\Pr(X_i\leq x_i)$ is the common distribution function of the components. The system distribution function $F_T=1-\bar F_T$ can be written as
\begin{equation}\label{hstarf}
F_T(t)=h^{*}(F(t)),
\end{equation} 
where $h^{*}(\cdot)$ is the dual distortion of $h(\cdot)$.

In reliability, the study of the preservation of stochastic orders under the formation of coherent systems is an important topic that has attracted
increasing attention  (see, for example, Navarro et al,\cite{NASS13, NASS14} Navarro and Gomis\cite{NG16}, Navarro and del Aguila\cite{NA17}, Arriaza et al\cite{ANS18},  Navarro and Sordo\cite{NS2018}, Navarro et al\cite{NAS18}, Li and Li\cite{LiLi2018} and Navarro and Cal\`{i}\cite{NC2019}). In the case of ID components, this is equivalent, in view of (\ref{hfbar}) and (\ref{hstarf}), to study the preservation of stochastic orders under distortion functions. Given two random variables $X$ and $Y$, and certain stochastic order $\leq_{(*)}$, the aim  is to find the largest subset $\mathfrak{F}\subseteq \mathfrak{D}$ such that:
\begin{equation*}
\mbox{if}\,\,X\leq_{(*)}Y \Longrightarrow X_{h}\leq_{(*)} Y_{h}\,\, \mbox{for all}\,\, h\in \mathfrak{F},
\end{equation*} 
\noindent where $X_h$ and $Y_{h}$ represent the distorted random variables by the function $h$ from $X$ and $Y$, respectively. For example, it is well-known that the total time on test transform order is preserved by convex and strictly increasing distortion functions (Li and Shaked\cite{LS2007}) and that the excess wealth order is preserved by concave and strictly increasing ones (Navarro et al\cite{NASS13}). 
A reasonable question is to ask whether these preservation results can be extended to  more general classes of distortion functions and whether these classes are useful for modelling purposes. To address this question, we consider in this paper the following classes of distortion functions.  
\begin{definition}
	Given a distortion function $h:[0,1] \rightarrow [0,1]$, we say that $h$ is starshaped (resp. antistarshaped) if $h(p)/p$ is increasing (resp. decreasing) for all $p\in(0,1].$
\end{definition}
We prove in Section 2 that the orders $\le_{ttt}$ and $\le_{ew}$ are preserved, respectively, by starshaped and antistarshaped distortion functions.  Since convex (resp. concave) distortion functions are starshaped (resp. antistarshaped), our results extend previous results in the literature. We also formulate in Section 2 preservation results in terms of the orders $\le_{dmrl}$ and $\le_{qmit}$. While these orders are not, in general, preserved by distortion functions (this is shown by counterexamples) there are cases in which they do. Specifically, the \textit{dmrl} order is preserved by antistarshaped distortion functions and the \textit{qmit} order is preserved by any distortion function $h$ such that its dual $h^*$ is antistarshaped.  

The results are applied in Section 3 to study the preservation of the above stochastic orders under the formation of coherent systems with  exchangeable components. A coherent system have exchangeable components if the joint distribution of the random vector of the component lifetimes is permutation invariant or, equivalently, if the components of the system are ID and the dependence among them is modelled by a symmetric copula (see Theorem 2.7.4 in Nelsen\cite{Nelsen06}). An $n$-dimensional copula $C:[0,1]\times\overset{(n)}{\cdots}\times[0,1]\rightarrow [0,1]$ is symmetric, if
$$C(u_1,u_2,\ldots,u_n)=C(u_{\pi(1)},u_{\pi(2)},\ldots,u_{\pi(n)})$$
	\noindent for any permutation $\pi:\{1,2,\ldots,n\}\rightarrow\{1,2,\ldots,n\}$.
The exchangeability of the components is a reasonable assumption when the system is formed by identical units and the failure of one of them affects equally the reliability of the remaining components (see Navarro and Rychlik\cite{NavarroRychlik2007}, Zhengcheng\cite{Zhengcheng2010} and Tavangar\cite{Tavangar2014}). As in Navarro et al\cite{NASS16}, the aim is to compare the lifetimes of two coherent systems $T_1=_{st} \phi(X_1,X_2,\ldots,X_n)$  and $T_2=_{st} \phi(Y_1,Y_2,\ldots,Y_n)$ with exchangeable components when $X_1 \le_{*}Y_1,$  where $\le_{*} $ is one of the orders under study. We illustrate the usefulness of our results by means of several examples of systems with exchangeable components where the reliability function of the system is modelled using starshaped (resp. antistarshaped) functions which are not convex (resp. concave).  Section 4 contains conclusions.

Throughout this paper, increasing means non-decreasing and decreasing means non-increasing.

\section{The main results}
\label{main}
\subsection{Preservation of the \textit{ttt} order  and the excess wealth order}
The following lemma will be used to prove the main results. For the proof, see Lemma 7.1(a) (and the remark below the lemma) in Chapter 4 of Barlow and Proschan\cite{BP81}.
\begin{lemma} \label{lBP}
Let $W$ be a measure on the interval $(a,b)$ and let $u$ be a non-negative function defined on 	$(a,b)$. If $\int_t^b dW(x)\ge 0$ for all $t\in (a,b)$ and if $u$ is increasing,  then  $\int_t^b u(x)dW(x)\ge 0$ for all $t\in (a,b).$
\end{lemma}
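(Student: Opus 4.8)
The plan is to prove the inequality by decomposing the increasing integrand $u$ into its elementary building blocks and exploiting the tail hypothesis on $W$ directly. The single observation that drives everything is that, for an upper-interval indicator $\mathbf{1}\{x>s\}$,
\[
\int_t^b \mathbf{1}\{x>s\}\,dW(x)=\int_{\max\{t,s\}}^b dW(x)\ge 0,
\]
which is exactly the hypothesis $\int_\tau^b dW\ge 0$ evaluated at $\tau=\max\{t,s\}$. Since a non-negative increasing function is a non-negative base value plus a superposition of such indicators, the conclusion should follow by forming the corresponding non-negative combination.

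First I would make the representation precise. Writing $\nu$ for the Lebesgue--Stieltjes measure generated by $u$, one has (up to the usual one-sided-limit convention)
\[
u(x)=u(a^{+})+\int_{(a,b)}\mathbf{1}\{s<x\}\,d\nu(s),
\]
with $u(a^{+})\ge 0$ and $\nu$ a non-negative measure. Substituting this into $\int_t^b u\,dW$ and interchanging the order of integration gives
\[
\int_t^b u(x)\,dW(x)=u(a^{+})\int_t^b dW(x)+\int_{(a,b)}\Big(\int_{\max\{t,s\}}^b dW(x)\Big)\,d\nu(s).
\]
The first term is non-negative because $u(a^{+})\ge 0$ and $\int_t^b dW\ge 0$; the second is non-negative because the inner integral equals $\int_{\max\{t,s\}}^b dW\ge 0$ for every $s$ while $\nu\ge 0$. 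This yields the claim.

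An equivalent and slightly slicker route is integration by parts. Setting $\bar W(x)=\int_x^b dW$ for the upper tail, which is non-negative by assumption and vanishes at $b$, the Lebesgue--Stieltjes integration-by-parts formula gives, after using $\bar W(b)=0$,
\[
\int_t^b u(x)\,dW(x)=u(t)\,\bar W(t)+\int_{(t,b)}\bar W(x)\,d\nu(x),
\]
where both summands are manifestly non-negative since $u\ge 0$, $\bar W\ge 0$ and $d\nu\ge 0$.

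I expect the only genuine difficulty to be measure-theoretic bookkeeping rather than the core idea. Because $W$ is a signed measure (for a non-negative $W$ the statement is vacuous, as then $\int_t^b u\,dW\ge 0$ trivially), the interchange of integrals in the first route, or the integration-by-parts identity and the vanishing boundary term at $b$ in the second, must be justified under an integrability hypothesis such as $\int_t^b u\,d|W|<\infty$, and one must handle one-sided limits and common atoms of $u$ and $W$ with care. These are exactly the technical points settled in the cited Lemma 7.1(a) of Barlow and Proschan; in the applications of this paper $u$ is bounded and $W$ finite, so they cause no trouble.
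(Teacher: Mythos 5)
Your proposal is correct. Note, however, that the paper does not actually prove Lemma \ref{lBP}: it is stated without proof and attributed to Lemma 7.1(a) in Chapter 4 of Barlow and Proschan, so there is no in-paper argument to compare against. What you supply is the standard proof of that classical result, and both of your routes are sound: the layer-cake decomposition $u(x)=u(a^{+})+\int\mathbf{1}\{s<x\}\,d\nu(s)$ with $\nu\ge 0$ reduces the claim, via Fubini, to the hypothesis applied at $\tau=\max\{t,s\}$, while the integration-by-parts identity $\int_t^b u\,dW=u(t)\bar W(t)+\int_{(t,b)}\bar W\,d\nu$ makes the non-negativity manifest term by term. You are also right to flag the only delicate points: Fubini (or the boundary term $\bar W(b^-)=0$) needs $\int_t^b u\,d\vert W\vert<\infty$, and the indicator representation recovers $u(x^-)$ rather than $u(x)$ at jump points, which matters only if $u$ and $W$ share atoms; in this paper $u$ is bounded and the measures $W$ arising in Theorems \ref{THttt}--\ref{THdmrl} are differences of quantile-type measures of continuous, strictly increasing distributions, so these issues do not arise. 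In short, your argument is a complete and correct substitute for the external citation, essentially reproducing the Barlow--Proschan proof.
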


Next we show that the orders $\leq_{ttt}$ and $\leq_{ew}$ are preserved by starshaped and antistarshaped distortion functions, respectively. 
\begin{theorem} \label{THttt}
	Let $X$ and $Y$ be two non-negative continuous random variables with
	strictly increasing distribution functions $F$ and $G$. If $X\leq _{ttt}Y$,
	then $X_{h}\leq _{ttt}Y_{h}$ for all starshaped distortion function $h.$
\end{theorem}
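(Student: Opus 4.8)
The plan is to reduce the claim to the comparison inequality furnished by Lemma \ref{lBP}. First I would translate the $\le_{ttt}$ statements into integrals over $(0,1)$ via the change of variable $u=F(x)$ (resp.\ $u=G(x)$). Writing $\phi_X(u)=1/f(F^{-1}(u))$ and $\phi_Y(u)=1/g(G^{-1}(u))$ for the density-quantile functions, this gives $\int_0^{F^{-1}(p)}\bar F(x)\,dx=\int_0^p (1-u)\phi_X(u)\,du$ and the analogous expression for $Y$, so that $X\le_{ttt}Y$ is equivalent to $\int_0^p dW(u)\ge 0$ for all $p\in(0,1)$, where $dW(u)=(1-u)[\phi_Y(u)-\phi_X(u)]\,du$ is a (signed) measure on $(0,1)$.

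Next I would handle the distorted variables. Since $F$ and $G$ are strictly increasing, the distribution function of $X_h$ is $F_h=h^*\circ F$ and its quantile function satisfies $F_h^{-1}(p)=F^{-1}((h^*)^{-1}(p))$, with the same identity holding for $Y_h$ with $G$ in place of $F$. Hence, setting $q=(h^*)^{-1}(p)$ (which ranges over $(0,1)$ as $p$ does, because $h^*$ is a distortion function), both distorted transforms are truncated at the \emph{same} level $q$, and the change of variable $u=F(x)$ yields $\int_0^{F_h^{-1}(p)} h(\bar F(x))\,dx=\int_0^q h(1-u)\phi_X(u)\,du$, with the corresponding expression for $Y_h$. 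Therefore $X_h\le_{ttt}Y_h$ amounts to showing $\int_0^q h(1-u)[\phi_Y(u)-\phi_X(u)]\,du\ge 0$ for every $q\in(0,1)$, i.e.\ $\int_0^q \tfrac{h(1-u)}{1-u}\,dW(u)\ge 0$.

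To match the hypotheses of Lemma \ref{lBP}, whose conclusion concerns tail integrals, I would apply the substitution $v=1-u$. This turns the head integrals $\int_0^{(\cdot)}$ into tail integrals $\int_{(\cdot)}^1$ on $(0,1)$: the assumption becomes $\int_t^1 d\tilde W(v)\ge 0$ for all $t\in(0,1)$ for the transformed measure $\tilde W$, while the weight $\tfrac{h(1-u)}{1-u}$ becomes $h(v)/v$. Here is where the starshaped hypothesis enters decisively: $h(v)/v$ is non-negative (as $h\ge 0$) and increasing in $v$ precisely because $h$ is starshaped. Lemma \ref{lBP}, applied with this weight and the measure $\tilde W$, then gives $\int_t^1 \tfrac{h(v)}{v}\,d\tilde W(v)\ge 0$ for all $t$, which is exactly the required inequality after undoing the substitution.

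The conceptual heart of the argument, and the step I would be most careful with, is the reduction itself: recognizing that $\le_{ttt}$ is a statement about partial integrals of a single signed measure, that the dual distortion forces both distorted transforms to be truncated at a common level $q$, and that the orientation-reversing substitution converts the problem into exactly the tail form of Lemma \ref{lBP} with $h(v)/v$ as the monotone weight. The remaining technical care concerns the generalized inverse $(h^*)^{-1}$ when $h$ is not strictly increasing, but this affects neither the quantile identity nor the integral identities above.
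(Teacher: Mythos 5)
Your argument is correct and is essentially the paper's own proof: both reduce $X\le_{ttt}Y$ to the positivity of partial integrals of a single signed measure, apply Lemma \ref{lBP} with the increasing nonnegative weight $h(v)/v$ supplied by starshapedness, and then change variables to identify the result as the \textit{ttt} condition for $X_h$ and $Y_h$. The only differences are cosmetic: you work with head integrals in $F^{-1}$ and flip via $v=1-u$, while the paper works directly with tail integrals in $\overline{F}^{-1}$, and your use of the generalized inverse $(h^*)^{-1}$ sidesteps the paper's explicit treatment of the interval where $h$ may vanish.
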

\begin{proof}
	Let $h$ be a starshaped distortion function. Then, $h$ is
	strictly increasing except possibly where it is $0$ (otherwise, if $h(t)=c>0$
	on $\left( t_{1},t_{2}\right) ,$ $0<t_{1}<t_{2}\leq 1,$ then $h(t)/t$
	decreases on $\left( t_{1},t_{2}\right) $). Let $0\leq t_{1}<1$ be such that 
	$h(t)=0$ for $0\leq t\leq t_{1}$ and $h(t)>0$ for $t_{1}<t\leq 1.$  Assume
	that%
	\[
	\int_{0}^{F^{-1}(p)}\overline{F}(x)dx\leq \int_{0}^{G^{-1}(p)}\overline{G}%
	(x)dx,\text{ }p\in \left( 0,1\right) ,
	\]%
	or, equivalently, that 
	\[
	\int_{p}^{1}td\left[ \overline{F}^{-1}(t)-\overline{G}^{-1}(t)\right] \geq
	0,\ \ p\in \left( 0,1\right) .
	\]%
	Since $h(t)/t$ is increasing in $t\in $ $\left( 0,1\right] ,$ using Lemma \ref{lBP}, we have 
	\[
	\int_{p}^{1}h(t)d\left[ \overline{F}^{-1}(t)-\overline{G}^{-1}(t)\right]
	\geq 0,\ \ p\in \left( 0,1\right) .
	\]%
	Equivalently, we can write%
	\[
	\int_{p}^{1}h(t)d\left[ \overline{F}^{-1}(t)-\overline{G}^{-1}(t)\right]
	\geq 0,\ \ p\in \left( t_{1},1\right) .\ \ 
	\]
	The change of variable $h(t)=x$ yields to%
	\[
	\int_{h(p)}^{1}xd\overline{F}^{-1}(h^{-1}(x))\geq \int_{h(p)}^{1}xd\overline{%
		G}^{-1}(h^{-1}(x)),\ \ p\in \left( t_{1},1\right) .
	\]%
	Since $h(t)$ is strictly increasing on $\left( t_{1},1\right) $ with range $%
\left( 0,1\right) ,$ 
	this is equivalent to%
	\[
	\int_{p}^{1}xd\overline{F}^{-1}(h^{-1}(x))\geq \int_{p}^{1}xd\overline{G}%
	^{-1}(h^{-1}(x)),\ \ p\in \left( 0,1\right) .
	\]%
	Given $x\in \left( 0,1\right) ,$ there exists a unique $t$ such that $x=h\left( 
	\overline{F}(t)\right) .$ Making again the corresponding change of variable we
	have 
	\[
	\int_{0}^{F^{-1}(h^{-1}(p))}h\left( \overline{F}(t)\right) dt\leq
	\int_{0}^{G^{-1}(h^{-1}(p))}h\left( \overline{G}(t)\right) dt,\text{ }p\in
	\left( 0,1\right) ,
	\]%
	which is the same as%
	\[
	\int_{0}^{F_{h}^{-1}(p)}\overline{F}_{h}(x)dx\leq \int_{0}^{G_{h}^{-1}(p)}%
	\overline{G}_{h}(x)dx,\text{ }p\in \left( 0,1\right) ,
	\]%
	where $\overline{F}_{h}(x)=h\left( \overline{F}(x)\right) $ and $\overline{G}%
	_{h}(x)=h\left( \overline{G}(x)\right) $ are the reliability functions of
	the random variables $X_{h}$ and $Y_{h},$ respectively. This ends the proof.
\end{proof}

The proof of the following result is similar to the proof of Theorem \ref{THttt} and therefore it is omitted. In this case, the distortion function must be strictly increasing.

\begin{theorem}
\label{THew}
	Let $X$ and $Y$ be two non-negative continuous random variables with
	strictly increasing distribution functions $F$ and $G$. If $X\leq _{ew}Y$,
	then $X_{h}\leq _{ew}Y_{h}$ for all antistarshaped strictly increasing distortion function $h.$
\end{theorem}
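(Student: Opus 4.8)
The plan is to mirror the proof of Theorem \ref{THttt}, adapting each step to the excess wealth order rather than the \emph{ttt} order. The excess wealth order is characterized by the integrals $\int_{F^{-1}(p)}^{\infty}\overline{F}(x)dx$ taken from the $p$-quantile to infinity, as opposed to from $0$ to the quantile. My first step would be to rewrite the hypothesis $X\le_{ew}Y$ in a form amenable to Lemma \ref{lBP}. Using the quantile substitution $x=\overline{F}^{-1}(t)$, the excess wealth integral becomes an integral in $t$ over $(0,p)$ of $t\,d[\overline{F}^{-1}(t)]$, so the hypothesis should read
\[
\int_{0}^{p}t\,d\!\left[\overline{F}^{-1}(t)-\overline{G}^{-1}(t)\right]\le 0,\quad p\in(0,1).
\]
Note the reversed direction of integration and the flipped inequality compared to the \emph{ttt} case; this is the structural reflection that makes the \emph{antistarshaped} condition (rather than starshaped) the natural hypothesis here.

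The key step is then to apply Lemma \ref{lBP} with the weight $h(t)/t$ playing the role of the increasing multiplier. For the excess wealth order the relevant integrals run from the left endpoint, so I would apply the lemma to the ``dual'' integral $\int_{0}^{p}$ form; since $h$ is antistarshaped, $h(t)/t$ is \emph{decreasing}, equivalently $-h(t)/t$ is increasing, and the sign bookkeeping must be arranged so that multiplying the integrand $t\,d[\overline{F}^{-1}-\overline{G}^{-1}]$ by $h(t)/t$ preserves the correct inequality. This converts the hypothesis into
\[
\int_{0}^{p}h(t)\,d\!\left[\overline{F}^{-1}(t)-\overline{G}^{-1}(t)\right]\le 0,\quad p\in(0,1).
\]

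After that, the argument is the change of variables already used in Theorem \ref{THttt}. I would first substitute $h(t)=x$ to rewrite the integral in terms of $x\,d\,\overline{F}^{-1}(h^{-1}(x))$ over $(0,h(p))$, then use that $h$ is strictly increasing with full range $(0,1)$ to replace $h(p)$ by an arbitrary $p\in(0,1)$, and finally substitute back via $x=h(\overline{F}(t))=\overline{F}_h(t)$ to recover
\[
\int_{F_h^{-1}(p)}^{\infty}\overline{F}_h(x)dx\le\int_{G_h^{-1}(p)}^{\infty}\overline{G}_h(x)dx,\quad p\in(0,1),
\]
which is precisely $X_h\le_{ew}Y_h$. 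The main obstacle, and the reason strict monotonicity is imposed here (unlike in the \emph{ttt} theorem, where $h$ may vanish on an initial interval), is that the excess wealth order involves the \emph{upper} tail of the distortion: if $h$ were flat anywhere on $(0,1)$, the change of variable $h(t)=x$ would fail to be a bijection precisely on the range that controls the large-quantile behavior, so I must verify that strict monotonicity of $h$ guarantees $h^{-1}$ is well defined on all of $(0,1)$ and that the endpoint $p\to 1$ (equivalently $t\to 1$, the tail) is handled correctly. Once that bijectivity is secured, each substitution is routine and the chain of equalities closes the proof.
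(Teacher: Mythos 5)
Your overall route is the intended one (the paper omits this proof, saying only that it parallels Theorem \ref{THttt}): pass to the quantile scale, weight by $h(t)/t$, and change variables back. Two concrete points need repair, however. First, the sign of your transformed hypothesis is wrong: with $t=\bar F(x)$ one gets
\[
\int_{F^{-1}(p)}^{\infty}\bar F(x)\,dx=-\int_{0}^{1-p}t\,d\bar F^{-1}(t)
\]
(the survival quantile $\bar F^{-1}$ is decreasing, so its Stieltjes measure is non-positive), and hence $X\le_{ew}Y$ is equivalent to $\int_{0}^{q}t\,d\left[\bar F^{-1}(t)-\bar G^{-1}(t)\right]\ge 0$ for all $q\in(0,1)$, not $\le 0$. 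Your error is consistent (you flip the sign again at the end), so it does not derail the logic, but both of your displayed intermediate inequalities are false as written.

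Second, and more substantively, the one step where this proof genuinely differs from the \emph{ttt} case is the application of Lemma \ref{lBP}, and that is exactly the step you leave as ``sign bookkeeping.'' Lemma \ref{lBP} converts a \emph{right}-tail positivity condition $\int_t^b dW\ge 0$ using an \emph{increasing non-negative} weight; here you have a \emph{left}-tail condition and the \emph{decreasing} non-negative weight $u(t)=h(t)/t$. Your proposal to pass to ``$-h(t)/t$ is increasing'' does not meet the lemma's hypotheses, because the weight must be non-negative. The correct fix is the reflection $s=1-t$: the left tails of the measure $t\,d\left[\bar F^{-1}(t)-\bar G^{-1}(t)\right]$ become right tails of the reflected measure, and $u(1-s)$ is increasing and non-negative, so Lemma \ref{lBP} applies and yields $\int_0^q h(t)\,d\left[\bar F^{-1}(t)-\bar G^{-1}(t)\right]\ge 0$ for all $q$ (equivalently, invoke the dual part of Lemma 7.1 in Barlow and Proschan, which handles decreasing weights and lower-tail conditions directly). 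With that step made precise, the changes of variable close the argument as you describe; note also that antistarshapedness already forces $h$ to be continuous on $(0,1]$, and an antistarshaped but not strictly increasing $h$ is flat near $p=1$ (e.g.\ $h(p)=\min(2p,1)$), a region the excess wealth integrals do reach, which is why strictness must be assumed here.
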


\subsection{Preservation of the \textit{dmrl} order and the \textit{qmit} order} \label{sect2}
The  \textit{dmrl} order is related to the convex transform order (see Section 4.B in Shaked and Shanthikumar\cite{SS07}). Given two non-negative random variables $X$ and $Y$, with distribution functions $F$ and $G$, respectively, $X$ is said to be smaller than $Y$ in the convex transform order (denoted by $X\leq_{c}Y$) if $G^{-1}F(x)$ is convex for all $x\geq 0.$   Since $G^{-1}F(x)=G_h^{-1}F_h(x)$ for any distortion function $h$, it is obvious that the convex transform order is preserved by any distortion function. By considering 
\begin{equation}
\label{imply}
X\leq_{c}Y \Longrightarrow X\leq_{dmrl} Y,
\end{equation}
a reasonable question is whether the order \textit{dmrl} is also preserved by any distortion function. Unfortunately, this is not always the case,  as the counterexample given in Example \ref{ce02} below shows. First, we state two technical lemmas which will be used in the sequel. The first one is obtained by differentiation of (\ref{defdmrl}).

\begin{lemma}
	\label{L3}Let $X$ and $Y$ be two absolutely continuous random variables with
	respective distribution functions $F$ and $G$ and density functions $f$ and $%
	g,$ respectively. Then, $X\leq _{dmrl}Y$ if and only if%
	\begin{equation}
	\int_{p}^{1}\frac{\left( 1-t\right) dF^{-1}(t)}{g\left( G^{-1}\left(
		p\right) \right) }\leq \int_{p}^{1}\frac{\left( 1-t\right) dG^{-1}(t)}{%
		f\left( F^{-1}\left( p\right) \right) }\text{ for all }p\in \left(
	0,1\right) .  \label{equivdmrl}
	\end{equation}
\end{lemma}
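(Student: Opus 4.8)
The plan is to obtain the characterization by differentiating the ratio that defines the $\leq_{dmrl}$ order. Write $A(p) = \int_{F^{-1}(p)}^{\infty} \bar{F}(x)\,dx$ and $B(p) = \int_{G^{-1}(p)}^{\infty} \bar{G}(x)\,dx$, so that $X \leq_{dmrl} Y$ means that $B/A$ is increasing on $(0,1)$. Since both $A$ and $B$ are positive and absolutely continuous on $(0,1)$ (because $F$ and $G$ are absolutely continuous), the ratio $B/A$ is increasing if and only if $B'(p)A(p) - B(p)A'(p) \geq 0$ for all $p \in (0,1)$.

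First I would compute the derivatives of $A$ and $B$. By the fundamental theorem of calculus together with the rule $\frac{d}{dp}F^{-1}(p) = 1/f(F^{-1}(p))$ for the derivative of the inverse, and using the identity $\bar{F}(F^{-1}(p)) = 1-p$, one obtains $A'(p) = -(1-p)/f(F^{-1}(p))$ and, analogously, $B'(p) = -(1-p)/g(G^{-1}(p))$. Substituting these expressions into $B'(p)A(p) \geq B(p)A'(p)$, both sides carry the common negative factor $-(1-p)$; cancelling it reverses the inequality and yields
\[
\frac{A(p)}{g(G^{-1}(p))} \leq \frac{B(p)}{f(F^{-1}(p))}, \quad p \in (0,1).
\]

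It then remains to rewrite $A(p)$ and $B(p)$ in the form appearing in (\ref{equivdmrl}). The change of variable $x = F^{-1}(t)$, under which $t = F(x)$, $\bar{F}(x) = 1-t$ and $dx = dF^{-1}(t)$, gives $A(p) = \int_p^1 (1-t)\,dF^{-1}(t)$, and likewise $B(p) = \int_p^1 (1-t)\,dG^{-1}(t)$. Inserting these identities produces exactly inequality (\ref{equivdmrl}), establishing the equivalence in both directions.

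The differentiation and the change of variable are routine. The step deserving a little care is the assertion that $B/A$ increases if and only if the pointwise inequality holds for every $p$: this relies on the absolute continuity of $F$ and $G$ to justify differentiating the integrals $A$ and $B$ and to guarantee that nonnegativity of the derivative is equivalent to monotonicity of the ratio. I expect this regularity bookkeeping, rather than any genuine analytic difficulty, to be the only point needing attention.
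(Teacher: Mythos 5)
Your proof is correct and takes essentially the same approach as the paper, which dispatches this lemma with the single remark that it ``is obtained by differentiation'' of the ratio defining the \textit{dmrl} order. Your computation of $A'(p)=-(1-p)/f(F^{-1}(p))$ and $B'(p)=-(1-p)/g(G^{-1}(p))$, the sign reversal from cancelling the negative factor $-(1-p)$, and the change of variable $x=F^{-1}(t)$ turning $A(p)$ and $B(p)$ into $\int_p^1(1-t)\,dF^{-1}(t)$ and $\int_p^1(1-t)\,dG^{-1}(t)$ are exactly that differentiation carried out in detail, yielding \eqref{equivdmrl}.
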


\begin{lemma}
	\label{L4}Let $X$ and $Y$ be two absolutely continuous random variables with
	 strictly increasing distribution functions $F$ and $G,$ respectively. Let $f$ and $%
	g$ be the respective density functions. Then, $X\leq _{dmrl}Y$ if and only if%
	\begin{equation}
	\int_{q}^{1}\frac{\left( 1-t\right) dF^{-1}(t)}{g\left( G^{-1}\left(
		p\right) \right) }\leq \int_{q}^{1}\frac{\left( 1-t\right) dG^{-1}(t)}{%
		f\left( F^{-1}\left( p\right) \right) },\text{ for all }0<p\leq q<1.
	\label{ineqdmrl}
	\end{equation}
\end{lemma}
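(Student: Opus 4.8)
The plan is to prove both implications, of which the forward (``if'') direction is immediate and the reverse (``only if'') direction carries the content. For the ``if'' part, I would simply set $q=p$ in (\ref{ineqdmrl}), which recovers exactly the inequality (\ref{equivdmrl}) of Lemma \ref{L3}, and hence $X\le_{dmrl}Y$.

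For the ``only if'' part, assume $X\le_{dmrl}Y$ and introduce the shorthand
\[
A(s)=\int_{s}^{1}(1-t)\,dF^{-1}(t)=\int_{F^{-1}(s)}^{\infty}\overline{F}(x)\,dx,\qquad B(s)=\int_{s}^{1}(1-t)\,dG^{-1}(t)=\int_{G^{-1}(s)}^{\infty}\overline{G}(x)\,dx,
\]
together with $a(p)=f(F^{-1}(p))$ and $b(p)=g(G^{-1}(p))$. Under the strict monotonicity of $F$ and $G$ all four quantities are strictly positive for arguments in $(0,1)$, so I may freely cross-multiply and divide. In this notation the target inequality (\ref{ineqdmrl}) reads $a(p)A(q)\le b(p)B(q)$ for $0<p\le q<1$, while Lemma \ref{L3} provides the pointwise relation $a(p)A(p)\le b(p)B(p)$, i.e. the case $q=p$.

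The key step is to ``move'' the lower limit from $p$ up to $q$ while keeping the densities evaluated at $p$, and for this I would invoke the definition of the \textit{dmrl} order directly: since $B(s)/A(s)$ is increasing in $s$, the hypothesis $p\le q$ yields $A(q)B(p)\le A(p)B(q)$. Multiplying this by $a(p)$ and chaining with $a(p)A(p)B(q)\le b(p)B(p)B(q)$ (obtained from Lemma \ref{L3} after multiplying by $B(q)>0$) gives
\[
a(p)A(q)B(p)\le a(p)A(p)B(q)\le b(p)B(p)B(q),
\]
and cancelling the positive factor $B(p)$ produces $a(p)A(q)\le b(p)B(q)$, which is (\ref{ineqdmrl}).

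I do not expect a genuine obstacle here: once the inequalities are written in the cross-multiplied form, the proof reduces to a two-line chaining argument. The only point requiring care is the bookkeeping of positivity --- ensuring that $A,B,a,b$ are strictly positive on $(0,1)$ (which is where the hypothesis of strictly increasing $F,G$ is used) so that multiplying the inequalities and cancelling $B(p)$ preserves their direction. The conceptual content is recognizing that the single-point inequality of Lemma \ref{L3}, combined with the monotonicity of the ratio $B/A$ (the very definition of $\le_{dmrl}$), is enough to upgrade to the two-parameter inequality.
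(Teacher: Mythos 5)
Your proof is correct and follows essentially the same route as the paper: the paper also obtains (\ref{ineqdmrl}) by chaining the pointwise inequality of Lemma \ref{L3} (its displayed condition at $p$) with the monotonicity in $p$ of the ratio $\int_{p}^{1}(1-t)\,dG^{-1}(t)\big/\int_{p}^{1}(1-t)\,dF^{-1}(t)$, and recovers the converse by taking $q=p$. Your cross-multiplied bookkeeping with $A,B,a,b$ is just a notational variant of the paper's chain of ratio inequalities.
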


\begin{proof}
	From Lemma \ref{L3}, $X\leq _{dmrl}Y$ if and only if (\ref{equivdmrl}) holds
	or, equivalently,%
	\begin{equation}
	\frac{f\left( F^{-1}\left( p\right) \right) }{g\left( G^{-1}\left( p\right)
		\right) }\leq \frac{\int_{p}^{1}\left( 1-t\right) dG^{-1}(t)}{%
		\int_{p}^{1}\left( 1-t\right) dF^{-1}(t)},\ \text{for all }p\in \left(
	0,1\right) .  \label{1}
	\end{equation}%
	On the other hand, \eqref{1} is equivalent to%
	\begin{equation}
		\frac{\int_{p}^{1}\left( 1-t\right) dG^{-1}(t)}{\int_{p}^{1}\left(
		1-t\right) dF^{-1}(t)}\ \text{increases in }p\in \left[ 0,1\right] .
	\label{2}
	\end{equation}%
	Combining (\ref{1}) and (\ref{2}) we see that $X\leq _{dmrl}Y$ if and only
	if (\ref{ineqdmrl}) holds.
\end{proof}
The following counterexample shows that the order \textit{dmrl} is not preserved by general distortion functions.

\begin{example}
	\label{ce02}
	 To construct the counterexample, we need two non-negative random variables such that $X\leq_{dmrl}Y$ but $X\nleq_{c}Y$ (otherwise, it follows from the preservation property of the convex transform order and \eqref{imply} that $X_h\leq_{dmrl}Y_h$ for any distortion $h$). Note that given two absolutely continuous random variables $X$ and $Y$, with respective distribution functions $F$ and $G$ and density functions $f$ and $g$ respectively,  then $X\le_cY$ if and only if 
	 $$\frac{f(F^{-1}(p))}{g(G^{-1}(p))} \text{ is increasing for all } p\in (0,1).$$
	  Let $X$ and $Y$ be two non-negative random variables with respective quantile functions given by $F^{-1}(p)=\frac{17}{8}p-\frac{1}{2}p^2$ and $G^{-1}(p)=\ln(\frac{15}{8}+p)$, for $p \in (0,1)$. We first note that  $X\nleq_{c}Y$ because
	  $$\frac{f(F^{-1}(p))}{g(G^{-1}(p))}=\frac{1}{(17/8-p)(15/8+p)}$$
	   is decreasing in the interval $(0,0.125)$ and increasing in $(0.125,1)$, see Figure \ref{notconv02} (left). Now we compute the integral
		\begin{equation}
	\label{dmrlIntEqu}
	I(p)=\displaystyle \int_{p}^{1} (1-t)\left [\frac{1}{g(G^{-1}(t))}-\frac{f(F^{-1}(p))}{g(G^{-1}(p))}\frac{1}{f(F^{-1}(t))}\right ]dt, \ p\in (0,1).
	\end{equation}
It is shown in Figure \ref{notconv02} (right) that $I(p)\geq 0,\,\,\mbox{for all}\,\,p\in (0,1).$ It follows from Lemma \ref{L3} that $X\leq_{dmrl}Y.$
\begin{figure}
		\begin{center}
			\includegraphics*[scale=0.65]{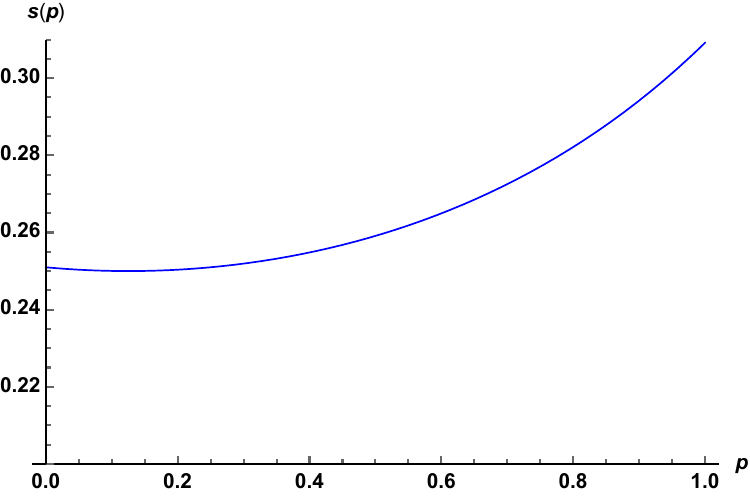}
			\includegraphics*[scale=0.65]{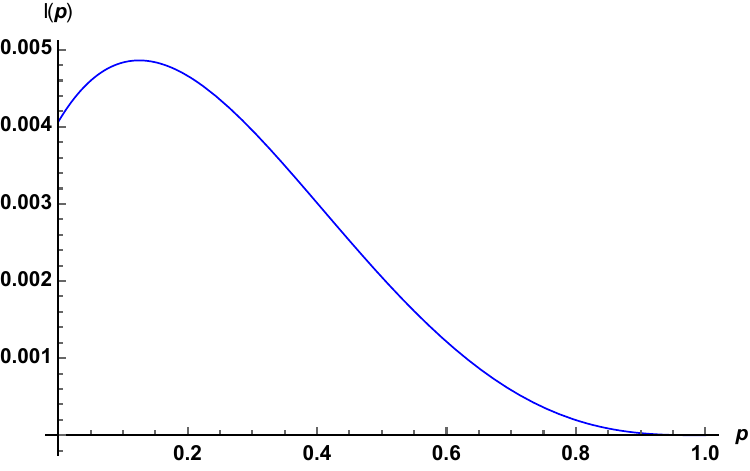}
			\caption{Plot of the function $s(p)=f(F^{-1}(p))/g(G^{-1}(p))$ in Example \ref{ce02} (left).  Plot of the integral $I(p)$ defined in \eqref{dmrlIntEqu} for all $p\in (0,1)$ (right).} \label{notconv02}
		\end{center}
	\end{figure}
Now consider the distortion function $h(p)=p^5$. The quantile functions $F^{-1}_{h}(p)$ and $G^{-1}_{h}(p)$ of $X_h$ and $Y_h$, respectively, are given by 
	$$F^{-1}_{h}(p)=F^{-1}(1-h^{-1}(1-p))=17/8 (1 - (1 - p)^{1/5}) - 1/2 [1 - (1 - p)^{1/5}]^2$$
	\noindent	and
	$$G^{-1}_{h}(p)=G^{-1}(1-h^{-1}(1-p))=\ln[23/8 - (1 - p)^{1/5}],$$
for $p\in (0,1)$. By taking derivatives we have 
	\begin{equation}
	\label{dmrlex01}
	f_h(F_h^{-1}(p))=\frac{5(1-p)^{4/5}}{\frac{9}{8}+(1-p)^{1/5}}
	\end{equation}
	\noindent and
	\begin{equation}
	\label{dmrlex02}
	g_h(G_h^{-1}(p))=5(1-p)^{4/5}(23/8-(1-p)^{1/5}),
	\end{equation}
for $p\in (0,1)$. 	
	Denote
	\begin{equation}
\label{dmrlIntEqu_h}
I_h(p)=\displaystyle \int_{p}^{1} (1-t)\left [\frac{1}{g_h(G_h^{-1}(t))}-\frac{f_h(F_h^{-1}(p))}{g_h(G_h^{-1}(p))}\frac{1}{f_h(F_h^{-1}(t))}\right ]dt, \ p\in (0,1).
\end{equation}	
Replacing  \eqref{dmrlex01} and \eqref{dmrlex02} into \eqref{dmrlIntEqu_h} we observe  that $I_h(p) \le 0$ for $p\in[0,0.02632]$ (see Figure \ref{notconv03}) and therefore $X_h\nleq_{dmrl}Y_h.$

	\begin{figure}
		\begin{center}
			\includegraphics*[scale=0.65]{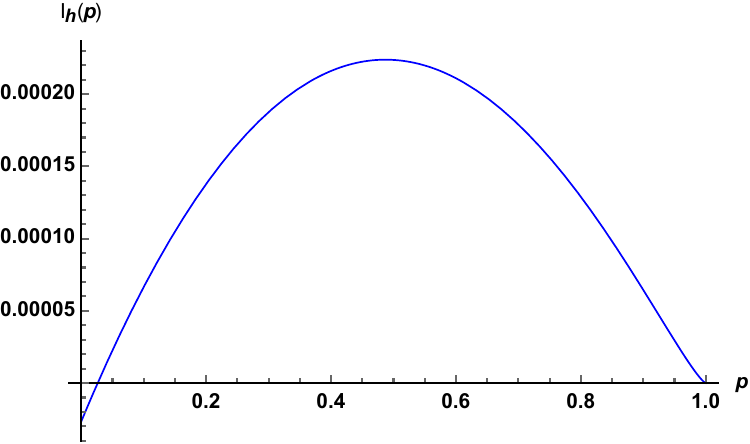}
			\caption{Plot of the integral expression defined in \eqref{dmrlIntEqu_h}.} \label{notconv03}
		\end{center}
	\end{figure}

\end{example}

 The following result shows that the \textit{dmrl} order is preserved by  antistarshaped strictly increasing distortion functions.
\begin{theorem} \label{THdmrl}
Let $X$ and $Y$ be two absolutely continuous random variables with
			strictly increasing distribution functions $F$ and $G,$ respectively. Let $h$ be an antistarshaped strictly increasing distortion function.
	If $X\leq _{dmrl}Y,$ then $X_{h}\leq _{dmrl}Y_{h}$. 
\end{theorem}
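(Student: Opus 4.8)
The plan is to characterize $X_h\le_{dmrl}Y_h$ through the criterion of Lemma \ref{L4} applied to the distorted pair, then to transport the hypothesis (which is Lemma \ref{L4} applied to $X,Y$) across a change of variables, and finally to close the gap using Lemma \ref{lBP}.

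First I would record the quantile and density data of the distorted variables. Since $\overline{F}_h=h(\overline{F})$ gives $F_h=1-h(\overline{F})$, one obtains $F_h^{-1}(p)=F^{-1}(1-h^{-1}(1-p))$, and differentiating $F_h(x)=1-h(\overline{F}(x))$ gives $f_h(x)=h'(\overline{F}(x))f(x)$. Writing $p'=1-h^{-1}(1-p)$, so that $\overline{F}(F_h^{-1}(p))=h^{-1}(1-p)=1-p'$, this yields
\[
f_h(F_h^{-1}(p))=h'(1-p')\,f(F^{-1}(p')),\qquad g_h(G_h^{-1}(p))=h'(1-p')\,g(G^{-1}(p')),
\]
with the same factor $h'(1-p')$ appearing on both sides. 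Because $h$ is strictly increasing, the map $p\mapsto p'$ is a strictly increasing bijection of $(0,1)$, so in particular $p\le q\iff p'\le q'$.

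Next I would substitute this data into the Lemma \ref{L4} inequality written for $X_h,Y_h$ and perform the change of variable $u=1-h^{-1}(1-t)$ in the quantile integrals, using $1-t=h(1-u)$ and $dF_h^{-1}(t)=dF^{-1}(u)$, so that $\int_q^1(1-t)\,dF_h^{-1}(t)=\int_{q'}^1 h(1-u)\,dF^{-1}(u)$ and likewise for $G$. The common factor $h'(1-p')$ then cancels, and after relabelling $p',q'$ as $p,q$ (legitimate by the bijection above), the statement $X_h\le_{dmrl}Y_h$ reduces to proving
\[
\frac{\int_q^1 h(1-u)\,dF^{-1}(u)}{g(G^{-1}(p))}\le\frac{\int_q^1 h(1-u)\,dG^{-1}(u)}{f(F^{-1}(p))},\qquad 0<p\le q<1.
\]

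Finally I would deduce this from the hypothesis. By Lemma \ref{L4} for $X,Y$, for each fixed $p$ the signed measure $d\mu(u)=f(F^{-1}(p))\,dF^{-1}(u)-g(G^{-1}(p))\,dG^{-1}(u)$ satisfies $\int_q^1(1-u)\,d\mu(u)\le 0$ for every $q\in[p,1)$. Setting $dW(u)=-(1-u)\,d\mu(u)$ on $(p,1)$ gives $\int_t^1 dW\ge 0$ there, while the weight $\phi(u)=h(1-u)/(1-u)$ is non-negative and increasing in $u$ precisely because $h$ is antistarshaped: $h(s)/s$ decreases in $s$ and $s=1-u$ decreases in $u$. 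Lemma \ref{lBP} then yields $\int_t^1\phi\,dW\ge 0$, that is $\int_t^1 h(1-u)\,d\mu(u)\le 0$, which is exactly the displayed inequality. The main obstacle is the bookkeeping of the change of variables in the reduction step, namely recognizing that the factor $h'(1-p')$ cancels on both sides and that antistarshapedness is exactly the condition making $\phi$ increasing, so that Lemma \ref{lBP} applies with the weight $\phi$ against the measure $dW$.
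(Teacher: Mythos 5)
Your proof is correct and follows essentially the same route as the paper's: both rest on the two-parameter characterization of Lemma \ref{L4}, apply Lemma \ref{lBP} with the weight $h(1-t)/(1-t)$ (increasing precisely because $h$ is antistarshaped), and use the same change of variable $t\mapsto 1-h^{-1}(1-t)$ with the cancelling factor $h'$. The only differences are presentational: you run the reduction backward from the conclusion and apply Lemma \ref{lBP} on the restricted interval $(p,1)$, whereas the paper argues forward and uses an indicator $I_{[q,1]}$ so as to work on all of $(0,1)$.
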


\begin{proof}
	Let $\bar{F}$ and $\bar{G}$ be the survival functions of $X$ and $Y,
	$ respectively and let $f$ and $g$ be the respective density functions. Let $%
	X_{h}$ and $Y_{h}$ be associated distorted random variables with respective
	survival funcions $\bar{F}_{h}$ and $\bar{G}_{h}$ and density
	functions $f_{h}$ and $g_{h},$ respectively, given by%
	\begin{eqnarray*}
		\bar{F}_{h}\left( t\right)  &=&h\left( \bar{F}(t)\right) ,\ 
		\bar{G}_{h}\left( t\right) =h\left( \bar{G}(t)\right)  \\
		f_{h}\left( t\right)  &=&h^{\prime }\left( \bar{F}(t)\right) f\left(
		t\right) ,g_{h}\left( t\right) =h^{\prime }\left( \bar{G}(t)\right)
		g(t).
	\end{eqnarray*}%
We first give the idea of the proof. The condition $X\leq _{dmrl}Y$ is equivalent to (\ref{equivdmrl}).  Since, from the assumption on $h$, the function $h\left( 1-t\right) /\left( 1-t\right) $ is increasing on $\left(
0,1\right),$  we would like to apply Lemma \ref{lBP} to  (\ref{equivdmrl}). However, Lemma \ref{lBP} cannot be applied because the integrands  on  (\ref{equivdmrl}) depend on the limits of integration. Instead, we focus on (\ref{ineqdmrl}), which is also equivalent to $X\leq _{dmrl}Y$. The next step is to rewrite (\ref{ineqdmrl}) to satisfy the assumptions of Lemma \ref{lBP}. Note that (\ref{ineqdmrl}) is the same as%
	\begin{equation}
	\int_{q}^{1}\left( 1-t\right) d\left[ \frac{F^{-1}(t)}{g\left( G^{-1}\left(
		p\right) \right) }-\frac{G^{-1}(t)}{f\left( F^{-1}\left( p\right) \right) }%
	\right] \geq 0,\ \ \text{for all }0<p\leq q<1.  \label{fp}
	\end{equation}%
	Now consider the function 
	\begin{equation*}
	\phi \left( s\right) =\int_{s}^{1}I_{\left[ q,1\right] }\left( t\right)
	\left( 1-t\right) d\left[ \frac{F^{-1}(t)}{g\left( G^{-1}\left( p\right)
		\right) }-\frac{G^{-1}(t)}{f\left( F^{-1}\left( p\right) \right) }\right] ,\
	\ 0<s<1,
	\end{equation*}%
	where $I_{\left[ q,1\right] }\left( t\right)=\left\{ 
	\begin{array}{cc}
	0, & 0<t<q \\ 
	   &       \\  
	1, & q\leq t<1,%
	\end{array}%
	\right. 
$ for all $t\in[0,1].$
	Then, by noting that%
	\begin{equation*}
	\phi \left( s\right) =\left\{ 
	\begin{array}{cc}
	\int_{q}^{1}\left( 1-t\right) d\left[ \frac{F^{-1}(t)}{g\left( G^{-1}\left(
		p\right) \right) }-\frac{G^{-1}(t)}{f\left( F^{-1}\left( p\right) \right) }%
	\right] , & 0<s<q \\ 
	&       \\  
	\int_{s}^{1}\left( 1-t\right) d\left[ \frac{F^{-1}(t)}{g\left( G^{-1}\left(
		p\right) \right) }-\frac{G^{-1}(t)}{f\left( F^{-1}\left( p\right) \right) }%
	\right] , & q<s<1,%
	\end{array}%
	\right. 
	\end{equation*}%
	we see that (\ref{fp}) is equivalent to%
	\begin{eqnarray*}
	\int_{s}^{1}I_{\left[ q,1\right] }\left( t\right) \left( 1-t\right) d\left[ 
	\frac{F^{-1}(t)}{g\left( G^{-1}\left( p\right) \right) }-\frac{G^{-1}(t)}{%
		f\left( F^{-1}\left( p\right) \right) }\right]  &\geq &0,\\
	\text{for all } 0<s<1, & &0<p\leq q<1  \notag
	\end{eqnarray*}%
Now we are in conditions to apply  Lemma \ref{lBP}.	Since  $h\left( 1-t\right) /\left( 1-t\right) $ is increasing on $\left(
	0,1\right) ,$ it follows  that%
	\begin{equation*}
	\int_{s}^{1}I_{\left[ q,1\right] }\left( t\right) h(1-t)d\left[ \frac{%
		F^{-1}(t)}{g\left( G^{-1}\left( p\right) \right) }-\frac{G^{-1}(t)}{f\left(
		F^{-1}\left( p\right) \right) }\right] \geq 0,\ 
	\end{equation*}%
	\begin{equation*}
	\ \text{for all }s>0,\ 0<q\leq p<1,
	\end{equation*}%
	or, equivalently,%
	\begin{equation*}
	\int_{q}^{1}h(1-t)d\left[ \frac{F^{-1}(t)}{g\left( G^{-1}\left( p\right)
		\right) }-\frac{G^{-1}(t)}{f\left( F^{-1}\left( p\right) \right) }\right]
	\geq 0,\ \ \text{for all }0<p\leq q<1.
	\end{equation*}%
	This implies%
	\begin{equation*}
	\frac{f\left( F^{-1}\left( p\right) \right) }{g\left( G^{-1}\left( p\right)
		\right) }\geq \frac{\int_{p}^{1}h\left( 1-t\right) dG^{-1}(t)}{%
		\int_{p}^{1}h\left( 1-t\right) dF^{-1}(t)},\ \text{for all }p\in \left(
	0,1\right) ,
	\end{equation*}%
	or, equivalently,%
	\begin{equation*}
	\frac{f\left( F^{-1}\left( 1-h^{-1}\left( 1-p\right) \right) \right) }{%
		g\left( G^{-1}\left( 1-h^{-1}\left( 1-p\right) \right) \right) }\geq \frac{%
		\int_{1-h^{-1}\left( 1-p\right) }^{1}h\left( 1-t\right) dG^{-1}(t)}{%
		\int_{1-h^{-1}\left( 1-p\right) }^{1}h\left( 1-t\right) dF^{-1}(t)},\ \text{%
		for all }p\in \left( 0,1\right) .
	\end{equation*}%
	The change of variable $x=1-h(1-t)$ yields to%
	\begin{eqnarray}
	&&\frac{f\left( F^{-1}(1-h^{-1}\left( 1-p\right) )\right) }{g\left(
		G^{-1}\left( 1-h^{-1}\left( 1-p\right) \right) \right) }  \label{rs} \\
	&\geq &\frac{\int_{p}^{1}\left( 1-x\right) dG^{-1}(1-h^{-1}\left( 1-x\right)
		)}{\int_{p}^{1}\left( 1-x\right) dF^{-1}(1-h^{-1}\left( 1-x\right) )},\ 
	\text{for all }p\in \left( 0,1\right) .  \notag
	\end{eqnarray}%
	Observe that the distribution functions of $X_{h}$ and $Y_{h}$ are given by $%
	F_{h}(x)=1-h\left( 1-F(x)\right) $ and $G_{h}(x)=1-h\left( 1-G(x)\right) .$
	Therefore, (\ref{rs}) is the same as 
	\begin{equation*}
	\frac{f_{h}\left( F_{h}^{-1}\left( p\right) \right) }{g_{h}\left(
		G_{h}^{-1}\left( p\right) \right) }\geq \frac{\int_{p}^{1}\left( 1-x\right)
		dG_{h}^{-1}(x)}{\int_{p}^{1}\left( 1-x\right) dF_{h}^{-1}(x)},\ \text{for
		all }p\in \left( 0,1\right) 
	\end{equation*}%
	which, using Lemma \ref{1}, is equivalent to $X_{h}\leq _{dmrl}Y_{h}.$
\end{proof}

\begin{remark}
	\label{conc}It is well-known that a concave distortion function is antistarshaped. Then, under the assumptions of Theorem \ref{THdmrl}, it follows that $X\leq _{dmrl}Y$ implies $X_{h}\leq _{dmrl}Y_{h}$ for any concave strictly increasing $h$. 
\end{remark}

The  \textit{qmit} order is related to the convex transform order and to the star order (see Arriaza et al\cite{ASS17} for details). Recall that given two non-negative random variables $X$ and $Y$, with distribution functions $F$ and $G$, respectively, $X$ is said to be smaller than $Y$ in the star order (denoted by $X\leq_{*}Y$) if $G^{-1}F(x)$ is starshaped in $x$. Using again that $G^{-1}F(x)=G_h^{-1}F_h(x)$ for any distortion function $h$, we see that the star order (like the convex transform order) is preserved by general distortion functions. Since  
\begin{equation}
\label{imply3}
X\leq_{c}Y \Longrightarrow X\leq_{qmit} Y \Longrightarrow X\leq_{*} Y
\end{equation}
(see Arriaza et al\cite{ASS17}) it is reasonable to wonder whether the order \textit{qmit} is also preserved by general distortion functions. As we show in the following counterexample, again the answer is negative.   
\begin{figure}
	\begin{center}
		\includegraphics*[scale=0.75]{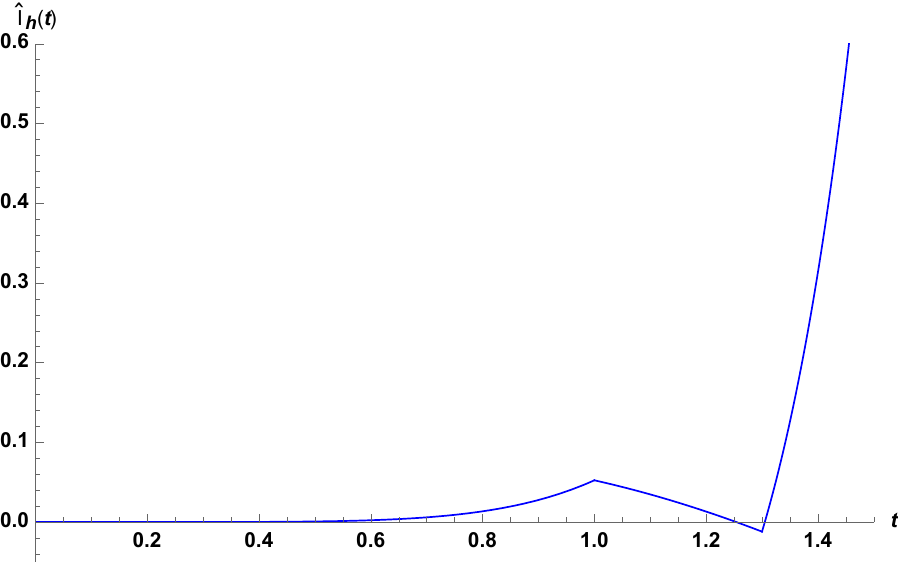}
		\caption{Plot of the function $\hat{I}_h(t)$ defined in \eqref{ex0101}} \label{FigcounterEx02}
	\end{center}
\end{figure}
\begin{example}
\label{ce01}
To construct the counterexample, we need two non-negative random variables such that $X\leq_{qmit}Y$ but $X\nleq_{c}Y$ (otherwise, it follows from the preservation property of the convex transform order and (\ref{imply3}) that $X_h\leq_{qmit}Y_h$ for any distortion $h$). The two random variables considered in Example 1 of Arriaza et al\cite{ASS17} satisfy this condition. Specifically, let $X$ be a non-negative random variable with distribution function $F(x)=1-e^{-\psi(x)}$, where $\psi\in C^{1}(\IR^{+})$ is defined as follows
\begin{equation*}
\psi(x)= \left\{ 
\begin{array}{lll}
e^{x}-1 &  & 0\leq x\leq 1, \\
 &&\\
 2e\sqrt{x}-e-1 &  & 1<x\leq \frac{13}{10},\\
 &&\\
\frac{5}{13}\sqrt{\frac{10}{13}}e^{x^2-\frac{69}{100}}+2(\sqrt{\frac{13}{10}}-1)e-\frac{5}{13}\sqrt{\frac{10}{13}}\,e+e-1 &  &\frac{13}{10}<x,%
\end{array}
\right. 
\end{equation*}
and let $Y$ be a random variable exponentially distributed with rate parameter $\lambda=1$. Consider the distortion function $h(p)=1-(1-p)^5$ and denote
\begin{equation}
\label{eq003}
\hat{I}_h(t)=\displaystyle \int_{0}^{t}[\alpha_{h}^{\prime}(t)-\alpha_{h}^{\prime}(x)]F_{h}(x)\,dt,\,\,\,t>0,
\end{equation}
 \noindent where $\alpha_h(x)=G^{-1}_{h}F_{h}(x)$ and $F_{h}(x)=h^{*}(F(x))$ with $h^{*}(p)=p^5$ the dual distortion of $h(p)$. Observing that $\alpha_h(x)=G^{-1}F(x)=\psi(x)$,  \eqref{eq003} can be rewritten as 
\begin{equation}
\label{ex0101}
\hat{I}_h(t)=\displaystyle \int_{0}^{t}[\psi^{\prime}(t)-\psi^{\prime}(x)]F^{5}(x)\,dt,\,\,\,t>0.
\end{equation}
We know, from Theorem 4 in Arriaza et al\cite{ASS17}, that $X_{h}\leq_{qmit}Y_{h}$ if and only if $\hat{I}_h(t)\ge 0$ for all $t$. However, this condition fails in the interval $[1.2539,1.3050]$ (see Figure \ref{FigcounterEx02}) and therefore $X_{h}\nleq_{qmit} Y_{h}$.

\end{example}
The following result can be proved following the same lines as the proof of Theorem \ref{THdmrl}.

\begin{theorem}
	\label{qmit}
	Let $X$ and $Y$ be two absolutely continuous random variables. Let $h$ be a strictly increasing distortion functions such that its dual $h^*$ is antistarshaped. If $X\leq
	_{qmit}Y,$ then $X_{h}\leq _{qmit}Y_{h}.$ 
\end{theorem}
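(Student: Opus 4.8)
The plan is to mirror the proof of Theorem \ref{THdmrl}, exploiting the duality between survival and distribution functions: distorting $X$ by $h$ distorts its \emph{distribution} function by the dual $h^{*}$, since $F_{h}=h^{*}\circ F$. This is precisely why the hypothesis is placed on $h^{*}$ rather than on $h$. Throughout I use that $X,Y$ are non-negative with strictly increasing distribution functions, so that $F(0)=G(0)=0$ and all the change-of-variable steps below are legitimate.

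First I would establish the \emph{qmit} counterparts of Lemmas \ref{L3} and \ref{L4}. Writing $A(p)=\int_{0}^{F^{-1}(p)}F(x)\,dx=\int_{0}^{p}t\,dF^{-1}(t)$ and analogously $B(p)$ for $G$, differentiation of the ratio $A(p)/B(p)$ shows that $X\leq_{qmit}Y$ is equivalent to the pointwise inequality
\[
\frac{f(F^{-1}(p))}{g(G^{-1}(p))}\geq\frac{\int_{0}^{p}t\,dG^{-1}(t)}{\int_{0}^{p}t\,dF^{-1}(t)},\quad p\in(0,1),
\]
which is the analog of (\ref{1}). Combining this with the defining monotonicity of the \emph{qmit} ratio (that $\int_{0}^{p}t\,dG^{-1}(t)/\int_{0}^{p}t\,dF^{-1}(t)$ increases in $p$) yields, for $0<q\leq p<1$, the two-parameter analog of Lemma \ref{L4}:
\[
\int_{0}^{q}t\,d\left[\frac{F^{-1}(t)}{g(G^{-1}(p))}-\frac{G^{-1}(t)}{f(F^{-1}(p))}\right]\geq 0.
\]

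Second I would feed this into a reflected form of Lemma \ref{lBP}. Because the \emph{qmit} integrals run from the left endpoint $0$ rather than to the right endpoint $1$, I would apply Lemma \ref{lBP} after the reflection $t\mapsto 1-t$; equivalently I use the symmetric statement that $\int_{0}^{t}dW\geq 0$ for all $t$, together with a non-negative \emph{decreasing} weight $u$, gives $\int_{0}^{t}u\,dW\geq 0$. Taking the measure $dW(t)=t\,d[\cdots]$ (truncated by an indicator $I_{(0,q]}$ exactly as $I_{[q,1]}$ is used in Theorem \ref{THdmrl}) and the weight $u(t)=h^{*}(t)/t$, which is non-negative and decreasing precisely because $h^{*}$ is antistarshaped, produces
\[
\int_{0}^{q}h^{*}(t)\,d\left[\frac{F^{-1}(t)}{g(G^{-1}(p))}-\frac{G^{-1}(t)}{f(F^{-1}(p))}\right]\geq 0,\quad 0<q\leq p<1.
\]
Setting $q=p$ and changing variables $x=h^{*}(t)$, I would use $F_{h}^{-1}=F^{-1}\circ(h^{*})^{-1}$ to rewrite $\int_{0}^{p}h^{*}(t)\,dF^{-1}(t)=\int_{0}^{h^{*}(p)}x\,dF_{h}^{-1}(x)$ (and likewise for $G$), together with $f_{h}(F_{h}^{-1}(h^{*}(p)))=(h^{*})'(p)\,f(F^{-1}(p))$, so that the factor $(h^{*})'(p)$ cancels in the ratio $f_{h}/g_{h}$. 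At the point $u=h^{*}(p)$, which ranges over all of $(0,1)$ since $h$ is strictly increasing, this is exactly the \emph{qmit} characterization (the L3-analog) for $X_{h}$ and $Y_{h}$, giving $X_{h}\leq_{qmit}Y_{h}$.

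The main obstacle is the orientation mismatch: Lemma \ref{lBP} is stated for tails $\int_{t}^{b}$, whereas the \emph{qmit} order lives on initial segments $\int_{0}^{t}$. I would resolve this by the reflection argument above, and I would take care that the correct distortion to insert is the dual $h^{*}$, with weight $h^{*}(t)/t$, so that the antistarshapedness of $h^{*}$ is exactly the property that makes the Barlow--Proschan step applicable.
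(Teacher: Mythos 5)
Your proposal is correct and follows exactly the route the paper intends: the authors omit the proof of Theorem \ref{qmit} with the remark that it ``can be proved following the same lines as the proof of Theorem \ref{THdmrl}'', and your argument is precisely that adaptation --- the qmit analogues of Lemmas \ref{L3} and \ref{L4} on initial segments $\int_0^p t\,dF^{-1}(t)$, the reflected Barlow--Proschan lemma with the decreasing weight $h^*(t)/t$, and the change of variable $x=h^*(t)$ using $F_h=h^*\circ F$. The key observation that the hypothesis must sit on the dual $h^*$ because distortion by $h$ acts on the distribution function through $h^*$ is exactly the point of the theorem as stated.
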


\begin{remark}
\label{qmitcx}
It is easy to see that the dual distortion of a convex and strictly increasing distortion is a concave function, and therefore, antistarshaped. Then, under the assumptions of Theorem \ref{qmit}, it follows that $X\leq _{qmit}Y$ implies $X_{h}\leq _{qmit}Y_{h}$ for any convex strictly increasing $h$.
\end{remark}


\section{Stochastic comparisons in coherent systems with exchangeable components}
\label{section3}

In this section, we apply the results in Section \ref{main} to study the preservation of the stochastic orders defined in Definition \ref{fourorders} under the formation of coherent systems with ID components. First, we note that the results can be trivially applied in the case of parallel and series systems. For example, let $(X_{1},...,X_{n})$ be a random vector with ID components  and distributional copula  $\hat{C}.$ It is easy to see that the reliability function of the parallel system $T_1=\max\{X_{1},...,X_{n}\}$ satisfies $\bar F_{T_1}=h(\bar F(t)),$ where $\bar F$ is the common reliability function of the components and \begin{equation}\label{hnn}
h(p)=1-\hat{C}(1-p,\ldots,1-p)
\end{equation}
is a distortion function. Similarly, let $(Y_{1},...,Y_{n})$ be a random vector with ID components and the same distributional copula  $\hat{C}$. The reliability function  of the system $T_2 =\max\{Y_{1},...,Y_{n}\}$ satisfies $\bar G_{T_2}=h(\bar G(t)),$ where  $\bar G$ is the common reliability function of the components. If  (\ref{hnn}) is antistarhaped, it follows from theorems \ref{THew} and \ref{THdmrl} that  $X_1\leq _{ew[dmrl]}Y_1$ implies $T_1 \leq _{ew[dmrl]} T_2$. Examples of copulas such that  (\ref{hnn}) is concave (and, therefore, antistarshaped) are the independence copula ($		\hat{C}\left( p_{1},...,p_{n}\right) =p_{1}\cdot \cdot\cdot p_{n}$), the comonotone copula ($\hat{C}\left( p_{1},...,p_{n}\right) =\min
\left\{ p_{1},...,p_{n}\right\}$) and, for $n=2$, the bivariate Cuadras-Aug\'{e} family of copulas $(\hat{C}_{\theta }(p_{1},p_{2})=\left[ \min \left( p_{1},p_{2}\right) %
\right] ^{\theta }\left[ p_{1}p_{2}\right] ^{1-\theta }, \ \ 0<\theta <1)$.
A similar argument can be used in the case of series systems taking into account that the reliability function of the system $T =\min\{X_{1},...,X_{n}\}$ satisfies $\bar F_{T}=g(\bar F(t)),$ where  
$g(p)=C(p,\ldots,p)$ is a distortion function and $C$ is the corresponding survival copula.	
	
Our purpose now is to provide applications in the case of more complex systems by using non-convex starshaped (respectively, non-concave antistarshaped) distortion functions. To that end, we consider coherent system with exchangeable components.  Navarro et al\cite{NavarroSandoval2007} proved that the reliability function of any coherent system with exchangeable components can be expressed as a mixture of series systems. Let $T=\phi(X_1,\ldots,X_n)$ be the lifetime of a system with $n$ exchangeable components, then its reliability function can be written as:  
\begin{equation} 
\label{minimalSig}
\bar{F}_{T}(t) = \displaystyle \sum_{i=1}^{n}a_i\bar{F}_{1:i}(t)
\end{equation}
\noindent where the vector $(a_1,a_2,\ldots,a_n)$ is called the minimal signature associated to the system, and $\bar{F}_{1:i}$ is the reliability function of the series systems formed by $i$ components. Since the vector $(X_1,X_2,\ldots,X_n)$ is exchangeable we can express $\bar{F}_{1:i}$ as follows:
\begin{equation*}
\bar{F}_{1:i}(t) =  \Pr[X_{\pi(1)}>t,X_{\pi(2)}>t,\ldots,X_{\pi(i)}>t]
\end{equation*}  
\noindent for any permutation $\pi:\{1,2,\ldots,n\}\rightarrow\{1,2,\ldots,n\}$. In particular, we can write:
\begin{equation}
\label{F1iExch}
\bar{F}_{1:i}(t)=\Pr[X_{1}>t,X_{2}>t,\ldots,X_{i}>t] =  C(\bar{F}(t),\overset{(i)}{\cdots},\bar{F}(t),1,\overset{(n-i)}{\cdots},1)
\end{equation}
\noindent where $\bar{F}(t)$ is the common reliability function of $X_i$ for all $i=1,2,\ldots,n$ and $C$ is the symmetric survival copula associated. Replacing 
\eqref{F1iExch} in \eqref{minimalSig} we get:
\begin{equation}
\label{RelFunDistExch}
\bar{F}_{T}(t) = h_{T}(\bar{F}(t))
\end{equation}
\noindent where $h_{T}(p)=\displaystyle \sum_{i=1}^{n}a_i h_{1:i}(p)$ and $h_{1:i}(p) = C(p, \overset{(i)}{\cdots}  ,p,1, \overset{(n-i)}{\cdots}   ,1)$. We note that $h_T$ only depends on the minimal signature associated to the system and its corresponding symmetric survival copula.

\subsection{Multivariate Durante et al. copula}

We consider the multivariate symmetric copula defined by Durante et al.\cite{Duranteetal2007}

Given a function $f:[0,1]\rightarrow [0,1]$, let $C_{f}:[0,1]\times\overset{(n)}{\cdots}\times[0,1]\rightarrow [0,1]$ be the function defined by
\begin{equation}
\label{copulaDurante}
C_{f}(p_1,p_2,\ldots,p_n) = p_{[1]}\,\displaystyle \prod_{i=2}^{n}f(p_{[i]})
\end{equation}
\noindent where $p_{[1]}, p_{[2]},\ldots, p_{[n]}$ denote the components of the vector $(p_1,\ldots,p_n)$ rearranged in increasing order. Thus, $p_{[1]}$ and $p_{[n]}$ represent the minimum and maximum of $(p_1,\ldots,p_n)$, respectively. Durante et al.\cite{Duranteetal2007} proved that $C_f(p_1,\ldots,p_n)$ is an $n$-dimensional copula if, and only if, the function $f$ satisfies the following properties:
\begin{align*}
i)\,\,& f(1)=1;\\
ii)\,\,& f\,\,\mbox{is increasing in}\,\, [0,1];\\
iii)\,\,& f\,\,\mbox{is antistarshaped in}\,\, (0,1].
\end{align*}
From different choices of $f$ we can obtain some particular copulas well-known in the literature. For example, if $f(p)=p$, then $C_{f}$ coincides with the independence copula. If $f_{\gamma}(p)=\gamma\,p+(1-\gamma)$ with $\gamma\in (0,1)$ and $n=2$, then we obtain the Fr\'{e}chet family copulas with parameter $\gamma$. A possible multivariate version of the Cuadras-Aug\'e can be obtained by taking $f_{\gamma}(p)=p^\gamma$ with $\gamma\in [0,1].$

The following result provides conditions  under which the distortion function associated to the reliability function of a coherent system with exchangeable components and a survival copula given by \eqref{copulaDurante} is  starshaped (antistarshaped).

\begin{theorem}
\label{ThCopulaDurante}
Let $T=\phi(X_1,X_2,\ldots,X_n)$ be the lifetime of a coherent system with $n$ exchangeable components, minimal signature $(a_1,a_2,\ldots,a_n)$ and survival copula $C_{f}$ given by \eqref{copulaDurante} with $f\in C^{1}(0,1)$. Then, 
\begin{equation}
\label{formulaCopulaDurante}
h_{T}(p)\,\,\mbox{is starshaped [antistarshaped]}\,\,\Longleftrightarrow \displaystyle \sum_{k=1}^{n-1}k\,a_{k+1}\,f^{k-1}(p)\geq [\leq]\, 0\,\,\mbox{for all}\,\, p\in (0,1).
\end{equation}
\noindent 
\end{theorem}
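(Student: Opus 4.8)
The plan is to reduce the whole statement to an explicit closed form for $h_T$ and then to a single differentiation of $h_T(p)/p$. First I would evaluate the series distortions $h_{1:i}$ appearing in \eqref{RelFunDistExch}. Since $C_f$ in \eqref{copulaDurante} is defined through the increasing rearrangement of its arguments and $p\le 1$, the vector $(p,\overset{(i)}{\dots},p,1,\overset{(n-i)}{\dots},1)$ rearranges with minimum $p_{[1]}=p$, then $i-1$ further copies of $p$, then $n-i$ copies of $1$. Using $f(1)=1$, this collapses to $h_{1:i}(p)=p\,f(p)^{\,i-1}$ for every $i$. Substituting into $h_T(p)=\sum_{i=1}^n a_i h_{1:i}(p)$ gives the compact expression
\[
h_T(p)=p\sum_{i=1}^{n}a_i f(p)^{\,i-1},\qquad \frac{h_T(p)}{p}=\sum_{i=1}^{n}a_i f(p)^{\,i-1}.
\]

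The next step uses the definition of (anti)starshapedness directly: $h_T$ is starshaped (resp. antistarshaped) if and only if $h_T(p)/p$ is increasing (resp. decreasing), and since $f\in C^{1}(0,1)$ this is equivalent to a sign condition on its derivative. Differentiating the expression above and discarding the constant $i=1$ term, I would obtain
\[
\frac{d}{dp}\,\frac{h_T(p)}{p}=f'(p)\sum_{i=2}^{n}(i-1)a_i f(p)^{\,i-2}=f'(p)\sum_{k=1}^{n-1}k\,a_{k+1}\,f(p)^{\,k-1},
\]
after the reindexing $k=i-1$, which produces exactly the polynomial factor appearing in \eqref{formulaCopulaDurante}.

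It then remains to read off the signs, and this is where the hypotheses on $f$ are used. Since $f$ is increasing we have $f'(p)\ge 0$, so the implication $\Leftarrow$ is immediate: if $\sum_{k=1}^{n-1}k\,a_{k+1}f^{k-1}(p)\ge[\le]0$ for all $p$, then the derivative of $h_T(p)/p$ has the corresponding sign, and starshapedness (resp. antistarshapedness) follows. I expect the main obstacle to be the converse direction, which requires cancelling the factor $f'(p)$; this is only legitimate where $f'(p)>0$. I would resolve it by invoking that $f$ is strictly increasing (as in the model cases $f_\gamma(p)=\gamma p+(1-\gamma)$ and $f_\gamma(p)=p^{\gamma}$, $\gamma\in(0,1)$), so that $f'(p)>0$ on $(0,1)$ and the sign of the derivative of $h_T(p)/p$ coincides pointwise with that of $\sum_{k=1}^{n-1}k\,a_{k+1}f^{k-1}(p)$; at any isolated point where $f'$ vanishes, the inequality extends by continuity of the polynomial $\sum_{k=1}^{n-1}k\,a_{k+1}u^{\,k-1}$ in $u=f(p)$. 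The delicate point to verify carefully is precisely this factorization of $f'(p)$ out of a polynomial in $f(p)$, ensuring no positivity is lost where $f'$ degenerates.
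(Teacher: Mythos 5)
Your proposal is correct and follows essentially the same route as the paper: compute $h_{1:i}(p)=p\,f^{i-1}(p)$ from \eqref{copulaDurante} using $f(1)=1$, sum against the minimal signature to get $h_T(p)/p=\sum_{i=1}^{n}a_i f^{i-1}(p)$, and differentiate to obtain the factor $f'(p)\sum_{k=1}^{n-1}k\,a_{k+1}f^{k-1}(p)$. The paper leaves the differentiation and the sign analysis implicit, whereas you make them explicit and correctly flag the only delicate point (cancelling $f'(p)$ in the converse direction when $f'$ may vanish), a subtlety the paper's one-line conclusion silently passes over.
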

\begin{proof}
Since the system has exchangeable components, its reliability function takes the form \eqref{RelFunDistExch}, where
\begin{equation}
\label{htTH}
h_T(p)  = \displaystyle \sum_{k=1}^{n}a_k h_{1:k}(p)=  \displaystyle \sum_{k=1}^{n}a_k\, C_{f}(p, \overset{(k)}{\cdots}  ,p,1, \overset{(n-k)}{\cdots}   ,1)=
\displaystyle \sum_{k=1}^{n}a_k\,p\,f^{k-1}(p)
\end{equation}
\noindent for all $p\in [0,1].$ From \eqref{htTH} we conclude that $h_{T}(p)/p$ is increasing [decreasing] in $p$ if, and only if, \eqref{formulaCopulaDurante} holds. 
\end{proof}

In particular, for systems with 3 and 4 components, we have the following results (we omit the straightforward proofs).
\begin{corollary}
\label{stcoro}
Let $T=\phi(X_1,X_2,X_3)$ be the lifetime of a coherent system with $3$ exchangeable components, minimal signature $(a_1,a_2,a_3)$ and survival copula $C_{f}$ given by \eqref{copulaDurante} with $f\in C^{1}(0,1)$.  Define $\omega =\frac{-a_2}{2a_3}$. Then,
\begin{equation*}
\text{if\,} 
\begin{cases}
a_3>0 \begin{cases} 
	\omega \geq 1 \hspace{3.12cm}\longrightarrow \, h_T(p)\,\, \mbox{is antistarshaped for any function}\,\, f,\\
 	\omega \in (0,1),\ f(0)\geq \omega  \hspace{0,5cm}\longrightarrow\,h_T(p)\,\, \mbox{is starshaped},\\
  	\omega \leq 0 \hspace{3.12cm}\longrightarrow\, h_T(p)\,\, \mbox{is starshaped for any function}\,\, f, \end{cases}\\
a_3<0 \begin{cases} 
	\omega \geq 1 \hspace{3.12cm}\longrightarrow\, h_T(p)\,\, \mbox{is starshaped for any function}\,\, f,\\ 
	\omega \in (0,1), \ f(0)\geq \omega  \hspace{0.5cm}\longrightarrow\, h_T(p)\,\, \mbox{is antistarshaped},\\
	\omega \leq 0 \hspace{3.12cm}\longrightarrow \, h_T(p)\,\, \mbox{is antistarshaped for any function}\,\, f.\end{cases}\\
\end{cases}
\end{equation*}
\end{corollary}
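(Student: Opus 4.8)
The plan is to specialize Theorem \ref{ThCopulaDurante} to the case $n=3$ and then translate the resulting sign condition into the stated dichotomy using the structural properties of $f$. First I would write out the characterizing sum of Theorem \ref{ThCopulaDurante} for $n=3$: since $\sum_{k=1}^{2}k\,a_{k+1}f^{k-1}(p)=a_2+2a_3 f(p)$, the theorem tells us that $h_T$ is starshaped if and only if $a_2+2a_3 f(p)\ge 0$ for all $p\in(0,1)$, and antistarshaped if and only if $a_2+2a_3 f(p)\le 0$ for all $p\in(0,1)$. Thus the whole problem reduces to deciding when the affine-in-$f$ expression $a_2+2a_3 f(p)$ keeps a constant sign on $(0,1)$.

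The key observation I would use next is that the three defining properties of $f$ pin down its range: since $f$ is increasing with $f(1)=1$ and takes values in $[0,1]$, we have $f(0)\le f(p)\le 1$ for every $p\in(0,1)$. Dividing $a_2+2a_3 f(p)$ by $2a_3$ (and recalling $\omega=-a_2/(2a_3)$) reduces the sign condition to comparing $f(p)$ with the threshold $\omega$, where the direction of the inequality flips according to the sign of $a_3$. Concretely, when $a_3>0$ starshapedness is equivalent to $f(p)\ge\omega$ on $(0,1)$ and antistarshapedness to $f(p)\le\omega$, while for $a_3<0$ the two equivalences exchange roles.

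Finally I would run the case analysis driven by where $\omega$ sits relative to the range $[f(0),1]$ of $f$. When $\omega\ge 1$ the bound $f(p)\le 1\le\omega$ forces $f(p)\le\omega$ for every admissible $f$, and when $\omega\le 0$ the bound $f(p)\ge 0\ge\omega$ forces $f(p)\ge\omega$; in both extremes the sign of $a_2+2a_3 f(p)$ is determined independently of $f$, yielding the ``for any function $f$'' conclusions after accounting for the sign of $a_3$. The only genuinely $f$-dependent regime is $\omega\in(0,1)$, and here the monotonicity of $f$ is what does the work: the hypothesis $f(0)\ge\omega$ propagates through $f(p)\ge f(0)\ge\omega$ to give the required sign on all of $(0,1)$. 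I expect this intermediate regime to be the only delicate point, since without the assumption $f(0)\ge\omega$ the quantity $f(p)-\omega$ could change sign and $h_T$ would be neither starshaped nor antistarshaped; the boundary condition $f(0)\ge\omega$ is precisely what guarantees a constant sign throughout.
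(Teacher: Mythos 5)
Your proof is correct and is precisely the ``straightforward'' specialization of Theorem \ref{ThCopulaDurante} to $n=3$ that the paper has in mind (the paper omits it): rewriting $a_2+2a_3f(p)=2a_3\bigl(f(p)-\omega\bigr)$ and locating $\omega$ relative to the range $[f(0),1]$ of the increasing function $f$ is exactly the intended argument. No gaps.
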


\begin{corollary}
\label{sndcoro}
Let $T=\phi(X_1,X_2,X_3,X_4)$ be the lifetime of a coherent system with $4$ exchangeable components, minimal signature $(a_1,a_2,a_3,a_4)$ and survival copula  $C_{f}$ given by \eqref{copulaDurante} with $f\in C^{1}(0,1)$. Define $\Delta =a_3^2-3a_2a_4$, $x_{[1]}=\min\{x_1,x_2\}$ and $x_{[2]}=\max\{x_1,x_2\}$, where $x_1$ and $x_2$ are the real roots of the equation $3a_4x^2+2a_3x+a_2=0$. Then,
\begin{equation*}
\text{if\,} 
\begin{cases}
a_4>0 \begin{cases} 
	\Delta \leq 0 \hspace{5.25cm}\longrightarrow \, h_T(p)\,\, \mbox{is starshaped for any function}\,\, f,\\
  	\Delta > 0 \begin{cases} 
	x_{[2]}\leq 0 \hspace{3.45cm}\longrightarrow\, h_T(p)\,\, \mbox{is starshaped for any function}\,\, f,\\	
	x_{[2]}\in(0,1), \ f(0)\geq x_{[2]} \hspace{0.55cm}\longrightarrow\, h_T(p)\,\, \mbox{is starshaped},\\	
	x_{[1]}\leq 0,\,x_{[2]}\geq 1 \hspace{1.9cm}\longrightarrow\, h_T(p)\,\, \mbox{is antistarshaped for any function}\,\, f,\\	
	x_{[1]}\in (0,1),\,x_{[2]}\geq 1, f(0)\geq x_{[1]} \hspace{0.1cm}\longrightarrow\, h_T(p)\,\, \mbox{is antistarshaped},\\
	x_{[1]}\geq 1 \hspace{3.5cm}\longrightarrow\, h_T(p)\,\, \mbox{is starshaped for any function}\,\, f,\\	
	\end{cases}	
	\end{cases}\\
a_4<0 \begin{cases} 
	\Delta \leq 0 \hspace{5.25cm}\longrightarrow\, h_T(p)\,\, \mbox{is antistarshaped for any function}\,\, f,\\ 
  	\Delta > 0 \begin{cases} 
	x_{[2]}\leq 0 \hspace{3.45cm}\longrightarrow\, h_T(p)\,\, \mbox{is antistarshaped for any function}\,\, f,\\	
	x_{[2]}\in(0,1), \ f(0)\geq x_{[2]} \hspace{0.6cm}\longrightarrow\, h_T(p)\,\, \mbox{is antistarshaped},\\	
	x_{[1]}\leq 0,\,x_{[2]}\geq 1 \hspace{1.9cm}\longrightarrow\, h_T(p)\,\, \mbox{is starshaped for any function}\,\, f,\\	
	x_{[1]}\in (0,1),\,x_{[2]}\geq 1,  f(0)\geq x_{[1]} \hspace{0.1cm}\longrightarrow\,h_T(p)\,\, \mbox{is starshaped},\\
	x_{[1]}\geq 1 \hspace{3.5cm}\longrightarrow\, h_T(p)\,\, \mbox{is antistarshaped for any function}\,\, f,\\	
	\end{cases}	
	\end{cases}
\end{cases}
\end{equation*}
\noindent if  $a_4=0$, see the scheme provided in Corollary \ref{stcoro}.
\end{corollary}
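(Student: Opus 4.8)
The plan is to specialize Theorem~\ref{ThCopulaDurante} to the case $n=4$ and then reduce the starshape/antistarshape condition to a sign analysis of a single quadratic polynomial evaluated along the curve $p\mapsto f(p)$. Setting $n=4$ in \eqref{formulaCopulaDurante}, the sum $\sum_{k=1}^{3}k\,a_{k+1}f^{k-1}(p)$ becomes
\[
a_2+2a_3\,f(p)+3a_4\,f^{2}(p)=Q(f(p)),\qquad Q(x):=3a_4x^{2}+2a_3x+a_2,
\]
so that, by Theorem~\ref{ThCopulaDurante}, $h_T$ is starshaped (resp. antistarshaped) if and only if $Q(f(p))\ge 0$ (resp. $\le 0$) for all $p\in(0,1)$.

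Next I would translate this pointwise requirement into a statement about the single-variable polynomial $Q$. Since $f$ is increasing on $[0,1]$ with $f(1)=1$, every value $f(p)$ with $p\in(0,1)$ satisfies $f(0)\le f(p)\le 1$, so the image of $f|_{(0,1)}$ is contained in $[f(0),1]$. Consequently, to guarantee $Q(f(p))\ge 0$ (resp. $\le 0$) for all $p\in(0,1)$ it suffices to show that $Q\ge 0$ (resp. $Q\le 0$) on the whole interval $[f(0),1]$; by continuity of $f$ this interval is in fact swept out, so the reduction is an equivalence. The problem thereby becomes the elementary one of controlling the sign of the parabola $Q$ over $[f(0),1]$.

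The remaining work is a geometric case analysis driven by the leading coefficient $3a_4$ and the roots of $Q$. When $a_4\ne 0$, the discriminant of $Q$ equals $4(a_3^{2}-3a_2a_4)=4\Delta$, so $Q$ has two real roots exactly when $\Delta>0$; these solve $3a_4x^{2}+2a_3x+a_2=0$ and are ordered as $x_{[1]}\le x_{[2]}$. For $a_4>0$ the parabola opens upward, so $Q\ge 0$ precisely on $(-\infty,x_{[1]}]\cup[x_{[2]},\infty)$ (and $Q\ge 0$ everywhere when $\Delta\le 0$); for $a_4<0$ all signs reverse. One then checks branch by branch the position of $[f(0),1]$ relative to the sign-regions of $Q$: placements such as $x_{[2]}\le 0$ or $x_{[1]}\ge 1$ force $[f(0),1]$ into a region where $Q$ keeps a fixed sign for \emph{every} admissible $f$, whereas a root lying inside $(0,1)$ produces a threshold condition of the form $f(0)\ge x_{[2]}$ (or $f(0)\ge x_{[1]}$) guaranteeing that $[f(0),1]$ avoids the opposite-sign interval $(x_{[1]},x_{[2]})$. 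The degenerate case $a_4=0$ collapses $Q$ to the affine polynomial $2a_3x+a_2$, which is exactly the situation treated in Corollary~\ref{stcoro}.

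The only point requiring care is this reduction to a fixed interval: one must observe that, thanks to the monotonicity of $f$ together with $f(1)=1$, all relevant test points $f(p)$ lie in $[f(0),1]$, so a sign condition for $Q$ on that interval is both necessary and sufficient. Granting this, each branch of the statement is a direct inspection of where an upward- or downward-opening parabola (or, when $a_4=0$, an affine function) is nonnegative, matching exactly the placements of $x_{[1]},x_{[2]}$ and the threshold conditions on $f(0)$ listed above; this is the routine verification the authors omit.
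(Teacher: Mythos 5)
Your proposal is correct and follows exactly the route the authors intend: the paper omits the proof as ``straightforward,'' and the intended argument is precisely the specialization of Theorem~\ref{ThCopulaDurante} to $n=4$, reducing the condition to the sign of the quadratic $Q(x)=3a_4x^2+2a_3x+a_2$ on the interval $[f(0),1]$ (using that $f$ is increasing with $f(1)=1$), followed by the elementary parabola case analysis you describe, whose branches all check out against the statement. The only slight overstatement is your claim that the reduction to $[f(0),1]$ is an \emph{equivalence} (the image of $f$ on $(0,1)$ need not sweep out all of $[f(0),1]$ without continuity at the endpoints), but since the corollary only asserts sufficiency this is immaterial.
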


\begin{example}
Let $T_1=\max(X_1,X_2,\min(X_3,X_4))$ and $T_2=\max(Y_1,Y_2,\min(Y_3,Y_4))$ be the lifetimes of two coherent systems with $4$ exchangeable components and survival copula $C_{f}$ given by \eqref{copulaDurante}. The minimal signature (see Navarro et al.\cite{NavarroSandoval2007}) is given by the vector $\mathbf{a}=(2, 0, -2, 1)$. Let us assume that the random vectors $\mathbf{X}=(X_1,X_2,X_3,X_4)$ and $\mathbf{Y}=(Y_1,Y_2,Y_3,Y_4)$ have the same survival copula $C_{f}$ defined in  \eqref{copulaDurante}, then both distortion functions $h_{T_1}$ and $h_{T_2}$ are given by 
\begin{equation}
\label{eq001101}
h_{T_1}(p)=h_{T_2}(p)= 2\,p-2\,p\,f^2(p)+p\,f^3(p).
\end{equation}
Since $a_4>0$, $\Delta = 4$, $x_{[1]}=0$ and $x_{[2]}=4/3$, it follows from Corollary \ref{sndcoro} that \eqref{eq001101} is antistarshaped independently of the choice of $f$. It follows from theorems \ref{THew} and \ref{THdmrl} that $ X_1 \leq_{ew[dmrl]} Y_1$ implies $T_{1}\leq_{ew[dmrl]}T_{2}$.
\end{example}

\begin{example}
Let $T_1$ and $T_2$ be the lifetimes of two coherent systems with $4$ exchangeable components, where $T_1=\min(X_1,\max(X_2,X_3),\max(X_2,X_4))$ and $T_2=\min(Y_1,\max(Y_2,Y_3),\max(Y_2,Y_4))$. The minimal signature associated to $T_1$ and $T_2$ is given by the vector $\mathbf{a}=(0,1,1,-1)$. Let us assume that the random vectors $\mathbf{X}=(X_1,X_2,X_3,X_4)$ and $\mathbf{Y}=(Y_1,Y_2,Y_3,Y_4)$ have the same survival copula $C_{f}$ defined in  \eqref{copulaDurante}, then both distortion functions $h_{T_1}$ and $h_{T_2}$ are given by 
\begin{equation}
\label{eq001100}
h_{T_1}(p)=h_{T_2}(p)= p\,f(p)+p\,f^2(p)-p\,f^3(p).
\end{equation}
Since $a_4<0$, $\Delta = 4$, $x_{[1]}=-1/3$ and $x_{[2]}=1$, it follows from Corollary \ref{sndcoro} that \eqref{eq001100} is starshaped independently of the choice of $f$. It follows from Theorem \ref{THttt} that $X_1\leq_{ttt}Y_1$ implies $T_{1}\leq_{ttt}T_{2}$.
\end{example}
\subsection{Multivariate copula with given diagonal section}

Given an $n$-dimensional copula $C:[0,1]\times\overset{(n)}{\cdots}\times[0,1]\rightarrow [0,1]$, we define the diagonal section of the copula $C$ as the function $\delta_{C}:[0,1]\rightarrow [0,1]$ such that $\delta_{C}(p)=C(p,\ldots,p)$ for all $p\in [0,1]$. The construction of an $n$-dimensional copula from a diagonal function is a relevant problem in copula theory.

\begin{definition}
A function $\mathfrak{d}:[0,1]\longrightarrow [0,1]$ is called an $n$-dimensional diagonal function, if it satisfies the following properties:
\begin{align*}
a) & \,\,\,\mathfrak{d}(1)=1,\\
b) & \,\,\,\mathfrak{d}(p)\leq p,\,\,\mbox{for all}\,\, p\in[0,1],\\
c) & \,\,\,0\leq \mathfrak{d}(p_2)- \mathfrak{d}(p_1)\leq n\,(p_2-p_1)\,\,\mbox{for all}\,\, p_1,\,p_2\in[0,1]\,\,\mbox{such that}\,\, p_1\leq p_2.
\end{align*}
\end{definition}

The set of all $n$-dimensional diagonals will be denoted by $\wp_n$. It is clear that any diagonal section $\delta_{C}$ of an $n$-dimensional copula $C$ belongs to the set $\wp_n$. The reverse result was proved by Cuculescu and Theodorescu\cite{Cuculescu2001}, i.e., given any $n$-dimensional diagonal $\mathfrak{d}\in \wp_n$, there exists an $n$-dimensional copula $C$ such that the diagonal section $\delta_{C}(p)=\mathfrak{d}(p)$ for all $p\in [0,1]$. Jaworski\cite{Jaworski2009} provides a constructive method to find the corresponding $n$-dimensional copula given an $n$-dimensional diagonal function.  
\begin{theorem}[Jaworski\cite{Jaworski2009}] If $\mathfrak{d}\in \wp_n$, then there exists an $n$-dimensional copula $C_{\mathfrak{d}}$ such that $\delta_{C_{\mathfrak{d}}}(p,\ldots,p)=\mathfrak{d}(p)$ for all $p\in [0,1]$, with
\begin{equation}
\label{JaworskiCopula}
C_{\mathfrak{d}}(p_1,p_2,\ldots,p_n)=\frac{1}{n}\,\displaystyle\sum_{i=1}^{n}\min\{f(p_{\tau^{i}(1)}),f(p_{\tau^{i}(2)}),\ldots,f(p_{\tau^{i}(n-1)}),\mathfrak{d}(p_{\tau^{i}(n)})\}
\end{equation}
\noindent where $f:[0,1]\longrightarrow [0,1]$ is the function given by
\begin{equation}
\label{funcionMCP}
f(u) = \frac{nu-\mathfrak{d}(u)}{n-1}
\end{equation}
\noindent and $\tau^{i}:\{1,2,\ldots,n\}\longrightarrow\{1,2,\ldots,n\}$ is a permutation defined as $\tau^{i}(k)=k+i$ mod $n$, for all $i=1,2,\ldots,n$.
\end{theorem}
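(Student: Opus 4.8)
The plan is to verify the four defining requirements in turn: that $C_{\mathfrak{d}}$ is grounded, has uniform univariate marginals, is $n$-increasing, and finally that its diagonal section recovers $\mathfrak{d}$. Before any of this I would record the elementary consequences of the three properties defining $\wp_n$. From property (b) together with the fact that $\mathfrak{d}$ takes values in $[0,1]$ we get $\mathfrak{d}(0)=0$, and property (a) gives $\mathfrak{d}(1)=1$; applying property (c) with $p_1=0$ and then with $p_2=1$ yields $\mathfrak{d}(u)\le nu$ and $\mathfrak{d}(u)\ge nu-(n-1)$ for all $u$. Substituting these into \eqref{funcionMCP} shows that $f$ is increasing, maps $[0,1]$ into $[0,1]$, and satisfies $f(0)=0$, $f(1)=1$ and, using $\mathfrak{d}(u)\le u$, the key inequality $f(u)\ge u\ge\mathfrak{d}(u)$.

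Next I would exploit the combinatorial structure of the shifts $\tau^i$. Each $\tau^i$ is a bijection, so within each summand every coordinate $p_j$ occurs exactly once; moreover $\tau^i(n)\equiv i \pmod n$, so as $i$ runs over $1,\dots,n$ each coordinate occupies the distinguished $\mathfrak{d}$-slot in exactly one summand and an $f$-slot in the remaining $n-1$. Groundedness is then immediate: if some $p_j=0$ then $f(p_j)=\mathfrak{d}(p_j)=0$, so every summand, being a minimum of nonnegative numbers one of which is $0$, vanishes. For the marginals, set $p_j=u$ and all other coordinates equal to $1$; using $f(1)=\mathfrak{d}(1)=1$, the summand in which $j$ sits in the $\mathfrak{d}$-slot equals $\mathfrak{d}(u)$ while each of the other $n-1$ summands equals $\min\{1,f(u)\}=f(u)$, whence \eqref{JaworskiCopula} gives $\frac{1}{n}\left[(n-1)f(u)+\mathfrak{d}(u)\right]$, which collapses to $u$ by the defining relation \eqref{funcionMCP}.

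The substantive step is the $n$-increasing property, and I would reduce it to a single clean observation rather than an $n$-fold difference computation. Each summand has the form $\min\{\phi_1(p_1),\dots,\phi_n(p_n)\}$ with every $\phi_k$ equal to $f$ or $\mathfrak{d}$, hence increasing. Writing $M(u_1,\dots,u_n)=\min\{u_1,\dots,u_n\}$ for the comonotone (Fr\'echet--Hoeffding upper) copula, each summand is $M$ composed coordinatewise with increasing maps. The induced box volume of such a composition over any box $\prod_k[a_k,b_k]$ equals the $M$-volume over the image box $\prod_k[\phi_k(a_k),\phi_k(b_k)]$, whose edges are correctly ordered because each $\phi_k$ is increasing (a degenerate edge with $\phi_k(a_k)=\phi_k(b_k)$ simply forces the volume to be $0$). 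Since $M$ is a copula and therefore $n$-increasing, this volume is nonnegative, so each summand is $n$-increasing; being an average of such functions, $C_{\mathfrak{d}}$ is $n$-increasing as well. I expect this composition argument to be the main hurdle, essentially because one must check that precomposition with merely monotone (possibly discontinuous) marginal transforms preserves nonnegativity of box volumes.

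Finally, for the diagonal I would evaluate \eqref{JaworskiCopula} at $p_1=\dots=p_n=p$. Every summand reduces to $\min\{f(p),\mathfrak{d}(p)\}$, and the inequality $f(p)\ge p\ge\mathfrak{d}(p)$ established at the outset shows this minimum equals $\mathfrak{d}(p)$; averaging the $n$ identical terms yields $C_{\mathfrak{d}}(p,\dots,p)=\mathfrak{d}(p)$, as required. Collecting the four steps proves that $C_{\mathfrak{d}}$ is an $n$-dimensional copula with diagonal section $\mathfrak{d}$.
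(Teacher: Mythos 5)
Your argument is correct, but note that the paper itself offers no proof of this statement: it is quoted verbatim from Jaworski (2009) as a known construction, so there is no in-paper argument to compare against. On its own merits your verification is sound and complete. The preliminary bounds $\mathfrak{d}(0)=0$, $nu-(n-1)\le\mathfrak{d}(u)\le\min\{u,nu\}$ and the resulting facts that $f$ is increasing, maps $[0,1]$ into $[0,1]$ with $f(0)=0$, $f(1)=1$ and $f(u)\ge u\ge\mathfrak{d}(u)$ all follow correctly from properties (a)--(c) of $\wp_n$; groundedness and the marginal computation $\tfrac{1}{n}\left[(n-1)f(u)+\mathfrak{d}(u)\right]=u$ are right, using that each index occupies the $\mathfrak{d}$-slot in exactly one summand (with the usual convention that the residue $0$ is read as $n$ in the definition of $\tau^i$). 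The key step --- reducing $n$-increasingness of each summand to the identity $V_{M\circ(\phi_1,\dots,\phi_n)}\bigl(\prod_k[a_k,b_k]\bigr)=V_M\bigl(\prod_k[\phi_k(a_k),\phi_k(b_k)]\bigr)$ for increasing $\phi_k$, combined with the fact that the comonotone bound $M$ is a genuine copula in every dimension --- is valid, since the box volume is an alternating sum over vertices and increasing coordinate maps send vertices of the original box to the correspondingly signed vertices of the image box. The diagonal evaluation via $\min\{f(p),\mathfrak{d}(p)\}=\mathfrak{d}(p)$ closes the argument.
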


If we consider $n=2$ in \eqref{JaworskiCopula}, we obtain the bivariate copula provided by Fredricks and Nelsen\cite{FredricksNelsen97} and defined as
\begin{equation*}
C_{\mathfrak{d}}(p_1,p_2)=\min\{p_1,p_2,\frac{\mathfrak{d}(p_1)+\mathfrak{d}(p_2)}{2}\}.
\end{equation*}
Observe that any $n$-dimensional copula defined as \eqref{JaworskiCopula} is symmetric. Therefore, the distortion function associated to any coherent system with $n$ ID components and survival copula $C_{\mathfrak{d}}$ can be expressed as
\begin{equation}
\label{distCopDiag}
h_{T}(p) = \displaystyle \sum_{i=1}^{n}a_i\,C_{\mathfrak{d}}(p,\overset{(i)}{\cdots},p,1,\overset{(n-i)}{\cdots},1)
\end{equation}
\noindent where $\mathbf{a}=(a_1,a_2,\ldots,a_n)$ is the corresponding minimal signature of the system. Since $\mathfrak{d}(u)\leq f(u)$ for all $u\in [0,1]$ and $\mathfrak{d}\in \wp_{n}$, we deduce that 
\begin{equation}
\label{copulaSimplif}
C_{\mathfrak{d}}(p,\overset{(i)}{\cdots},p,1,\overset{(n-i)}{\cdots},1) = \frac{1}{n}\left [(n-i)f(p)+i\,\mathfrak{d}(p)\right ].
\end{equation}
Replacing \eqref{funcionMCP} and \eqref{copulaSimplif} into \eqref{distCopDiag}, we obtain that
\begin{equation}
\label{repSySDiagCopula}
h_T(p)= \alpha\,p+\beta\,\mathfrak{d}(p)
\end{equation}
\noindent where $\alpha = \frac{1}{n-1}\displaystyle \sum_{i=1}^n a_i(n-i)$ and $\beta = \frac{1}{n-1}\displaystyle \sum_{i=1}^n a_i(i-1)$. Note that $\alpha$ and $\beta$ only depend on the minimal signature of the system and on the number of components. From the representation \eqref{repSySDiagCopula} we deduce the following theorem. 
\begin{theorem}
\label{ThDiagCop}
Let $T=\phi(X_1,X_2,\ldots,X_n)$ be the lifetime of a coherent system with $n$ exchangeable components and survival copula $C_{\mathfrak{d}}$ given by \eqref{JaworskiCopula} with $\mathfrak{d}\in \wp_n$ an $n$-dimensional diagonal, then the distortion function associated to the system $h_{T}(p)$ can be expressed as \eqref{repSySDiagCopula} and it satisfies that
$$h_T(p)\,\, \mbox{is starshaped [antistarshaped]}\,\,\Longleftrightarrow\,\, \mathfrak{d}(p)\,\, \mbox{is starshaped and $\beta>0\,\,[\beta<0]$}.$$
\end{theorem}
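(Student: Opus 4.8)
The plan is to reduce everything to the representation \eqref{repSySDiagCopula}, which is the only place where the specific copula $C_{\mathfrak{d}}$ enters; once $h_T(p)=\alpha p+\beta\,\mathfrak{d}(p)$ is available, the statement becomes a purely one-dimensional monotonicity comparison between the ratios $h_T(p)/p$ and $\mathfrak{d}(p)/p$. So the first step is simply to divide \eqref{repSySDiagCopula} by $p$ and obtain, for $p\in(0,1]$,
\[
\frac{h_T(p)}{p}=\alpha+\beta\,\frac{\mathfrak{d}(p)}{p}.
\]
By the definition of a starshaped distortion, $h_T$ is starshaped iff $h_T(p)/p$ is increasing, and since $\alpha$ is a constant this is equivalent to $\beta\,\mathfrak{d}(p)/p$ being increasing; likewise $\mathfrak{d}$ is starshaped iff $\mathfrak{d}(p)/p$ is increasing. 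This already yields the easy implication: if $\mathfrak{d}$ is starshaped and $\beta>0$, then $\beta\,\mathfrak{d}(p)/p$ is increasing, hence $h_T$ is starshaped; the antistarshaped bracket follows identically with $\beta<0$ reversing the monotonicity.

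The delicate part is the converse, i.e. recovering the \emph{sign} of $\beta$ and excluding the spurious branch in which $h_T(p)/p$ could be increasing because $\beta<0$ while $\mathfrak{d}(p)/p$ decreases. Here I would exploit the defining properties of an $n$-dimensional diagonal in $\wp_n$: since $\mathfrak{d}(p)\le p$ for all $p$ and $\mathfrak{d}(1)=1$, the ratio obeys $\mathfrak{d}(p)/p\le 1=\mathfrak{d}(1)/1$ on $(0,1]$, so it attains its maximum at the right endpoint. Consequently $\mathfrak{d}(p)/p$ cannot be (non-trivially) decreasing: were it decreasing it would have to stay $\ge 1$ to the left of $p=1$, which together with $\mathfrak{d}(p)/p\le 1$ forces $\mathfrak{d}(p)=p$, the degenerate identity diagonal. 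Ruling that case out, $\mathfrak{d}(p)/p$ is not decreasing, so $\beta\,\mathfrak{d}(p)/p$ increasing forces both $\beta>0$ and $\mathfrak{d}(p)/p$ increasing, i.e. $\mathfrak{d}$ starshaped. The antistarshaped statement follows by the same reasoning with all inequalities and monotonicities reversed.

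I expect the main obstacle to be precisely this sign argument. The bare algebraic equivalence ``$\beta\,\mathfrak{d}(p)/p$ increasing $\iff$ $\mathfrak{d}$ starshaped and $\beta>0$'' is not valid in full generality (it breaks down when $\beta\le 0$ or for the identity diagonal), so the proof genuinely needs the structural constraints $\mathfrak{d}(p)\le p$ and $\mathfrak{d}(1)=1$ encoded in $\wp_n$ to pin down $\beta>0$. Everything else—dividing by $p$, reading off the equivalence of the two monotonicities, and transcribing the antistarshaped case—is routine once \eqref{repSySDiagCopula} is established.
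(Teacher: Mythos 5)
Your proposal is correct, and it is worth noting that the paper itself offers no proof of this theorem at all: after deriving the representation $h_T(p)=\alpha p+\beta\,\mathfrak{d}(p)$ in \eqref{repSySDiagCopula}, the authors simply write that the theorem is ``deduced'' from it. Your forward direction (divide by $p$, observe $h_T(p)/p=\alpha+\beta\,\mathfrak{d}(p)/p$, read off the monotonicity) is exactly the intended argument. Your treatment of the converse is a genuine addition: you correctly identify that the equivalence needs the structural constraints $\mathfrak{d}(p)\le p$ and $\mathfrak{d}(1)=1$ to force $\mathfrak{d}(p)/p$ to attain its maximum at $p=1$, so that a non-trivially decreasing ratio is impossible and the sign of $\beta$ is pinned down. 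The degenerate cases you flag are real: if $\beta=0$, or if $\mathfrak{d}(p)=p$ (the comonotone diagonal, which is a legitimate element of $\wp_n$), then $h_T(p)/p$ is increasing regardless of the sign of $\beta$, and the stated ``$\Longleftrightarrow$'' fails as written. So your proof is not only complete where the paper is silent, it also exposes that the theorem is literally correct only after excluding these degenerate configurations --- a caveat the authors should arguably have recorded.
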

\begin{example}
Let $T_1=\min(X_1,X_2,\max(\min(X_3,X_4),\min(X_3,X_5),\min(X_4,X_5)))$ and $T_2=\min(Y_1,Y_2,\max(\min(Y_3,Y_4),\min(Y_3,Y_5),\min(Y_4,Y_5)))$ be the lifetimes of two coherent systems with $5$ exchangeable components. The minimal signature associated to both systems is given by the vector $\mathbf{a}=(0,0,0,3,-2)$. Let us assume that the random vectors $\mathbf{X}=(X_1,X_2,X_3,X_4,X_5)$ and $\mathbf{Y}=(Y_1,Y_2,Y_3,Y_4,Y_5)$ have the same survival copula $C_{\mathfrak{d}}$ defined as \eqref{JaworskiCopula}, with diagonal given by the starshaped function $\mathfrak{d}(p)=2p^2-p^3$. From \eqref{repSySDiagCopula} the distortion function associated to both systems is
\begin{equation*}
h_{T_1}(p)=h_{T_2}(p)=\frac{3}{4}p+\frac{1}{4}\mathfrak{d}(p).
\end{equation*}
From Theorem \ref{ThDiagCop} we conclude that $h_{T_1}(p)$ is a starshaped function (which is non convex). It follows from Theorem \ref{THttt} that $X_1\leq_{ttt}Y_1$ implies $T_{1}\leq_{ttt}T_{2}$.
\end{example}
\begin{example}
A k-out-of-n system is a system with n components which fails if, and only if, at least k components fail.
Let $X_{3:4}$ and $Y_{3:4}$ be the lifetimes of two $3$-out-of-$4$ systems with $4$ exchangeable components. The minimal signature associated to both systems is given by the vector $\mathbf{a}=(0,6,-8,3)$. Let us assume that the random vectors $\mathbf{X}=(X_1,X_2,X_3,X_4)$ and $\mathbf{Y}=(Y_1,Y_2,Y_3,Y_4)$ have the same survival copula $C_{\mathfrak{d}}$ given by \eqref{JaworskiCopula}, with diagonal function given by $\mathfrak{d}(p)=\frac{1}{4}p+\frac{3}{4}(2p^2-p^3)$. Then, from \eqref{repSySDiagCopula} the distortion functions associated to both systems are given by
\begin{equation*}
h_{3:4}(p)=\frac{4}{3}p-\frac{1}{3}\mathfrak{d}(p).
\end{equation*}
The diagonal function $\mathfrak{d}(p)$ is a starshaped (non convex) function. It follows from Theorem \ref{ThDiagCop} that $h_{3:4}(p)$ is antistarshaped and from theorems \ref{THew} and \ref{THdmrl} that $X_1\leq_{ew[dmrl]}Y_1$ implies $X_{3:4}\leq_{ew[dmrl]}Y_{3:4}$.
\end{example}

\begin{example}
\label{exampleqmit002}
Let $T_1=\min(X_1,X_2,\max(X_3,X_4))$ and $T_2=\min(Y_1,Y_2,\max(Y_3,Y_4))$ be the lifetimes of two coherent systems with $4$ exchangeable components. The minimal signature associated to both systems is given by the vector $\mathbf{a}=(0,0,2,-1)$. Let us assume that the random vectors $\mathbf{X}=(X_1,X_2,X_3,X_4)$ and $\mathbf{Y}=(Y_1,Y_2,Y_3,Y_4)$ have the same survival copula $C_{\mathfrak{d}}$ given by \eqref{JaworskiCopula}, with diagonal function given by $\mathfrak{d}(p)=1-\frac{7}{4}(1-p)+\frac{3}{2}(1-p)^2-\frac{3}{4}(1-p)^3$ (see right plot of Figure \ref{Exampleqmit002}). From \eqref{repSySDiagCopula} we obtain that the distortion functions associated to both systems are given by
\begin{equation*}
h_{T_1}(p)=h_{T_2}(p)=\frac{2}{3}p+\frac{1}{3}\mathfrak{d}(p).
\end{equation*}
It is not difficult to see that $h(p)$ is a strictly increasing distortion and that its dual distortion $h^{*}(p)=1-\frac{2}{3}(1-p)-\frac{1}{3}\mathfrak{d}(1-p)$ is an antistarshaped (non concave) function. Figure \ref{Exampleqmit002} plots the functions $h^{*}(p)/p$ (left) and  $\mathfrak{d}(p)$ (right). From Theorem \ref{qmit} it follows that $X_1\leq_{qmit}Y_1$ implies $T_{1}\leq_{qmit}T_{2}$.
\end{example}
\begin{figure}
	\begin{center}
		\includegraphics*[scale=0.33]{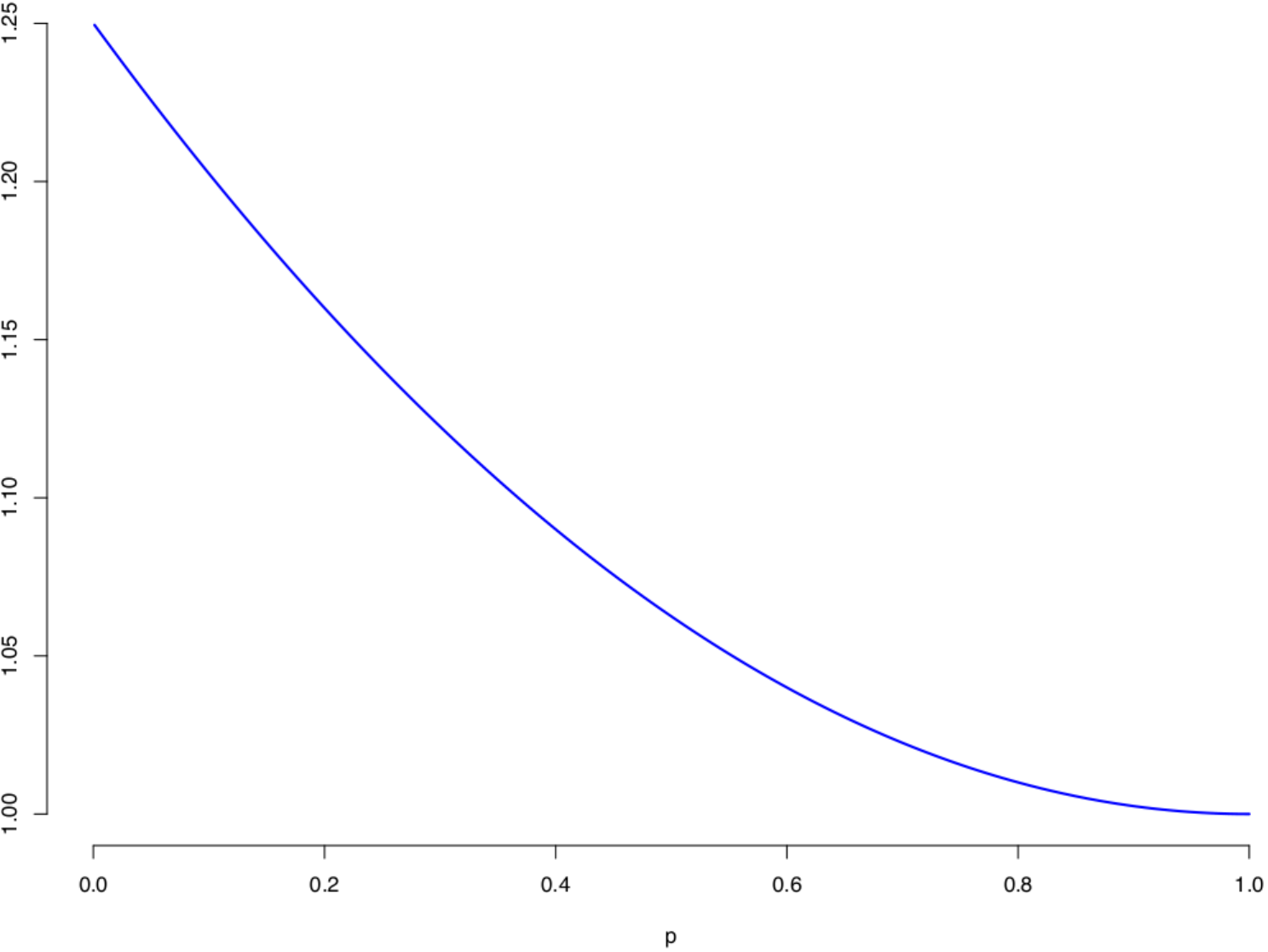}
	    \includegraphics*[scale=0.33]{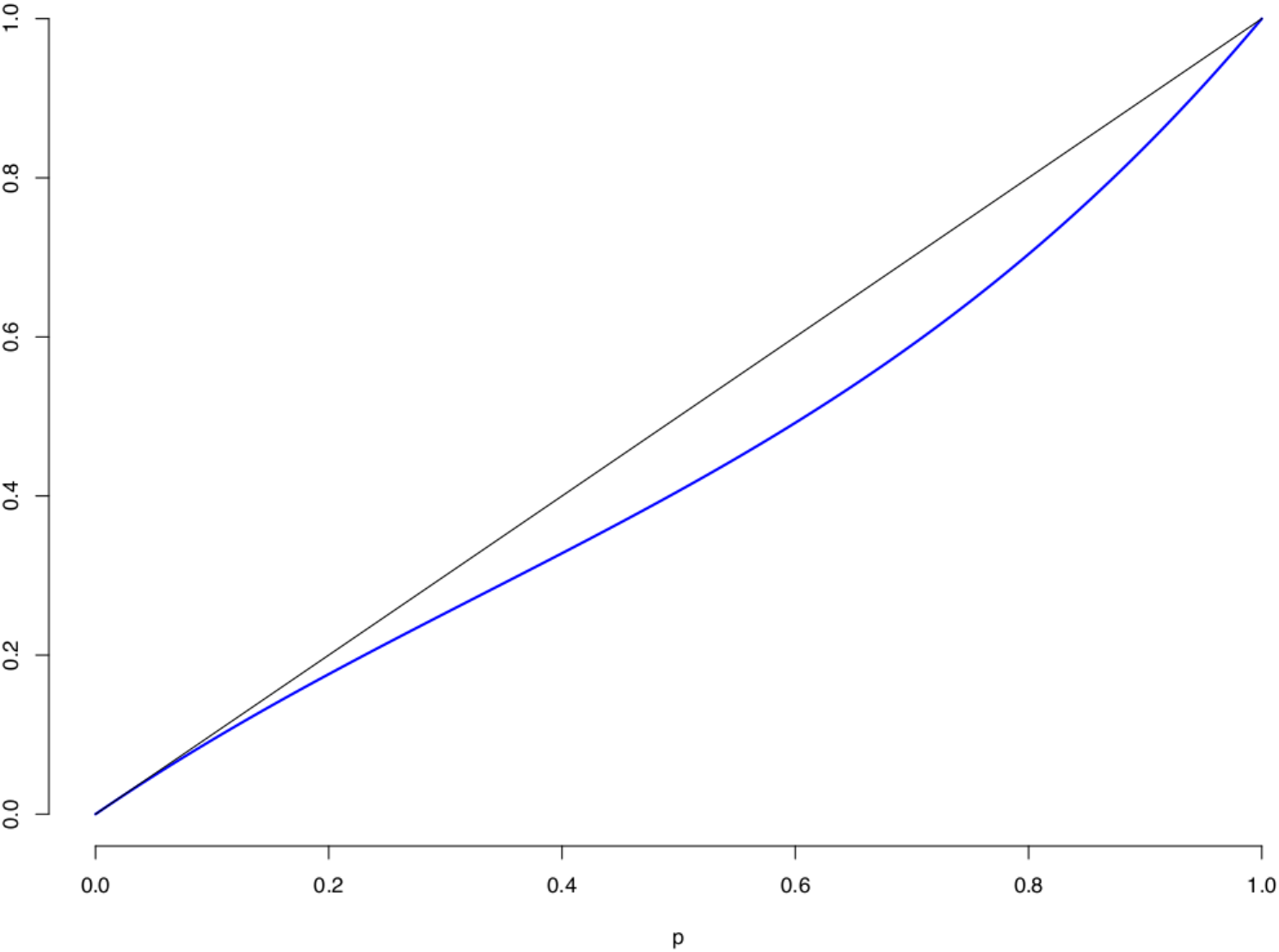}
		\caption{Plots of the functions $h^{*}(p)/p$ (left),  $\mathfrak{d}(p)$ (right in blue) and the identity function (right in black) corresponding to the Example \ref{exampleqmit002}.} \label{Exampleqmit002}
	\end{center}
\end{figure}

\section{Conclusions} 
In this paper, we have shown that the total time on test transform order, the excess wealth order, the decreasing mean residual life order and the quantile mean inactivity time order are preserved by some classes of distortion functions. In particular, the results regarding the two first orders extend previous results in the literature by considering starshaped and antistarshaped distortion functions, which contain, respectively, the classes of convex and concave distortion functions. The results have been applied to study the preservation of these four stochastic orders under the formation of coherent systems. After showing that the results can be directly applied to the study of parallel and series systems with ID components, we have focussed on more complex systems with exchangeable components. We have illustrated the applicability of the results with several examples using different families of copulas.  The relevant distortion functions in the examples are non-convex starshaped and non-concave antistarshaped, which highlight the usefulness of the theoretical results.

\section*{Acknowledgements}

We thank the anonymous reviewers for helpful comments. The authors were supported by Ministerio de Economía y Competitividad of Spain under grant MTM2017-89577-P.\\


\end{document}